\renewcommand\section{\@startsection{section}{1}{\z@}
                                   {-3.5ex \@plus -1ex \@minus -.2ex}
                                   {2.3ex \@plus .2ex}
                                   {\normalfont\large\bfseries}}
\renewcommand\subsection{\@startsection{subsection}{2}{\z@}
                                   {-3.25ex\@plus -1ex \@minus -.2ex}
                                   {1.5ex \@plus .2ex}
                                   {\normalfont\normalsize\bfseries}}
\renewcommand\subsubsection{\@startsection{subsubsection}{3}{\z@}
                                   {-3.25ex\@plus -1ex \@minus -.2ex}
                                   {1.5ex \@plus .2ex}
                                   {\normalfont\normalsize\bfseries}}
\renewcommand\paragraph{\@startsection{paragraph}{4}{\z@}
                                   {3.25ex \@plus1ex \@minus.2ex}
                                   {-1em}
                                   {\normalfont\normalsize\bfseries}}
\newdimen\tableauside\tableauside=1.0ex
\newdimen\tableaurule\tableaurule=0.4pt
\newdimen\tableaustep
\def\phantomhrule#1{\hbox{\vbox to0pt{\hrule height\tableaurule
width#1\vss}}}
\def\phantomvrule#1{\vbox{\hbox to0pt{\vrule width\tableaurule
height#1\hss}}}
\def\sqr{\vbox{%
  \phantomhrule\tableaustep

\hbox{\phantomvrule\tableaustep\kern\tableaustep\phantomvrule\tableaustep}%
  \hbox{\vbox{\phantomhrule\tableauside}\kern-\tableaurule}}}
\def\squares#1{\hbox{\count0=#1\noindent\loop\sqr
  \advance\count0 by-1 \ifnum\count0>0\repeat}}
\def\tableau#1{\vcenter{\offinterlineskip
  \tableaustep=\tableauside\advance\tableaustep by-\tableaurule
  \kern\normallineskip\hbox
    {\kern\normallineskip\vbox
      {\gettableau#1 0 }%
     \kern\normallineskip\kern\tableaurule}%
  \kern\normallineskip\kern\tableaurule}}
\def\gettableau#1 {\ifnum#1=0\let\next=\null\else
  \squares{#1}\let\next=\gettableau\fi\next}
\renewcommand\section{\@startsection{section}{1}{\z@}
                                   {-3.5ex \@plus -1ex \@minus -.2ex}
                                   {2.3ex \@plus .2ex}
                                   {\normalfont\large\bfseries}}
\renewcommand\subsection{\@startsection{subsection}{2}{\z@}
                                   {-3.25ex\@plus -1ex \@minus -.2ex}
                                   {1.5ex \@plus .2ex}
                                   {\normalfont\normalsize\bfseries}}
\renewcommand\subsubsection{\@startsection{subsubsection}{3}{\z@}
                                   {-3.25ex\@plus -1ex \@minus -.2ex}
                                   {1.5ex \@plus .2ex}
                                   {\normalfont\normalsize\bfseries}}
\renewcommand\paragraph{\@startsection{paragraph}{4}{\z@}
                                   {3.25ex \@plus1ex \@minus.2ex}
                                   {-1em}
                                   {\normalfont\normalsize\bfseries}}
\newcommand{\id}{\hbox{1\kern-.27em l}}
\newcommand{\ZZ}{\mathbb{Z}}
\newcommand{\half}{ {\textstyle \frac{1}{2}  } }
\newcommand{\al}{\alpha}
\newcommand{\bet}{\beta}
\newcommand{\ka}{\kappa}
\newcommand{\la}{\lambda}
\newcommand{\tha}{\theta}
\newcommand{\cN}{\mathcal{N}}
\newcommand{\D}{{\rm d}}
\newcommand{\rar}{\rightarrow}
\newcommand{\emp}{\emptyset}
\newcommand{\non}{\nonumber}
\newcommand{\SO}{\mathrm{SO}}
\newcommand{\Sp}{\mathrm{Sp}}
\newcommand{\su}{\mathrm{su}}
\newcommand{\so}{\mathrm{so}}
\newcommand{\spl}{\mathrm{sp}}
\newcommand{\Spin}{\mathrm{Spin}}
\newcommand{\ba}{\begin{array}}
\newcommand{\ea}{\end{array}}
\newcommand{\rso}{rigid semisimple suface operator}
\newtheorem{lemma}{Lemma}[section]
\newtheorem{proposition}{Proposition}[section]
\newtheorem{remark}{Remark}[section]
\newtheorem{Def}{Definition}
\newenvironment{rmk}{\begin{remark} \em}{\end{remark}}
\begin{document}

\title[Rigid Surface Operators and Symbol Invariant of Partitions]
{Rigid Surface Operator and Symbol Invariant of Partitions}

\author[Chuanzhong Li]{Chuanzhong Li}
\address[Chuanzhong Li]{$^1$ College of Mathematics and Systems Science, Shandong University of Science and Technology, Qingdao,266590, P.R.China}
\email{lichuanzhong@sdust.edu.cn}

\author[Bao Shou]{Bao Shou}
\address[Bao Shou]{$^2$ Center  of Mathematical  Sciences,
Zhejiang University,
Hangzhou 310027, China}
\email{bsoul@zju.edu.cn}

\subjclass[2010]{05E10,81T99}
\keywords{partition, symbol, mismatch problem, surface operator,
$S$ duality}
\date{}

\begin{abstract}
The symbol is used to describe the Springer correspondence for the classical groups by Lusztig. We refine the explanation  that  the $S$-duality maps of the rigid surface operators are symbol preserving maps. And we find that the maps $X_S$ and $Y_S$ used in the construction of $S$-duality maps are essentially the same. We clear up cause of  the mismatch problem of the total number of  the rigid surface operators between the $B_n$ and $C_n$  theories. And we  construct all the $B_n/C_n$ rigid surface operators which can not have a dual. A classification of  the problematic surface operators is made.
\end{abstract}

\maketitle

\tableofcontents

\section{Introduction}

Surface operators are two-dimensional defects supported on a two-dimensional submanifold of spacetime, which are  natural generalisations of the 't~Hooft operators.  In \cite{GW06},  Gukov and Witten initiated a study of surface operators in $\mathcal{N}=4$ super Yang-Mills theories in the ramified case of the Geometric Langlands Program.

$S$-duality for certain subclass of surface operators is discussed  in \cite{Wit07}\cite{Wy09}. The $S$-duality  \cite{Montonen:1977} assert that
$S : \; (G, \tau) \rightarrow  ( G^{L}, - 1 / n_{\mathfrak{g}} \tau)$ (where $n_{\mathfrak{g}}$ is 2 for $F_4$,  3 for $G_2$, and 1 for other semisimple classical groups \cite{GW06};  $\tau=\theta/2\pi+4\pi i/g^2 $ is usual gauge coupling constant ). This transformation exchanges   gauge group $G$ with the Langlands dual group. For example, the Langlands dual groups of $\Spin(2n{+}1)$ are $\Sp(2n)/\ZZ_2$. And the langlands dual groups of $\SO(2n)$ are  themselves.

In \cite{GW08},  Gukov and Witten extended their earlier analysis \cite{GW06} of surface operators which are based on the invariants of duality. They identified a subclass of surface operators called {\it 'rigid'} surface operators, which are expected to be closed under $S$-duality. There are two types rigid surface operators:  unipotent and semisimple. The rigid semisimple surface operators  are labelled by pairs of  partitions. And unipotent rigid surface operators arise  when one of the  partitions is empty. In  \cite{Wy09}, some proposals for the $S$-duality maps related to rigid surface operators  were made in the $B_n$($\SO(2n{+}1)$) and $C_n$($\Sp(2n)$) theories. These  proposals involved  all unipotent rigid surface operators as well as certain subclasses of rigid semisimple operators.

In  \cite{ShO06},  we  analyse  and extend the $S$-duality maps proposed by Wyllard,  using consistency checks. We  propose the $S$-duality for   a subclasses of  rigid  surface operators.  The symbol invariant is more convenient than other invariants to study the $S$ duality of surface operators but its calculation is boring. In  \cite{Shou-sc}, we propose  equivalent definitions of symbols for different  theories uniformly.  Based on  the new definition,  we simplify the computation of symbol extremely.  We give another construction of the symbol invariant in \cite{SW17}. Fingerprint  is another  invariant of partitions related to the Kazhdan-Lusztig map for the classical groups. We discuss the basic properties of  fingerprint and the constructions in  \cite{SW17}.  We prove the symbol invariant of partitions implies the fingerprint invariant of partitions in \cite{iv}. And we  also make a classification of the symbol preserving maps, which is the basics of  study in this paper.

The $S$ duality maps preserve symbol but not  all symbol preserving maps are $S$ duality maps.    However more thorough understanding   the construction of the $S$ duality of  surface operators might lead to progress.
A problematic mismatch in the total number of rigid surface operators between  the $B_n$ and the $C_n$ theories was pointed out in \cite{GW08} \cite{Wy09}.  The discrepancy is clearly a major problem and hamper the  attempt to  analysis to more general classes of semisimple surface operators.   Fortunately, the construction of symbol \cite{ShO06} and the classification of symbol preserving maps are helpful to address this problem in \cite{iv}.

In this paper,  we attempt to extend the analysis in \cite{GW08}, \cite{Wy09}, and \cite{ShO06}. Since no noncentral rigid conjugacy classes in the $A_n$ theory, we do not discuss surface operators in this case. We also omit the discussion of the exceptional groups, which are more complicated.     We will focus on theories with gauge groups $\SO(2n)$  and the gauge groups $\Sp(2n) $ whose Langlands dual group are $\SO(2n+1)$.

In Section \ref{surf},  we review the construction of rigid surface operators given in \cite{GW08}.  We discuss some mathematical results and definitions as preparation. We focus on the  symbol invariant of  surface operators which are unchanged under the $S$-duality map. In Section \ref{symbol},  we review the  \textit{symbol} invariant proposed in \cite{Wy09},\cite{Shou-sc}. We refine   the computational rules of symbol found in \cite{Shou-sc}. We find the contributions to symbol of a row in the same location of a pairwise rows are the same   in the $B_n$, $C_n$, and $D_n$ theories. As applications,  the $S$-duality maps proposed in the \cite{Wy09} \cite{ShO06} can be illustrated more clearly \cite{Wy09}. We find that the maps $X_S$ and $Y_S$ are   essentially the same map.

The second part of the paper involve the mismatch problem of the total number of  the rigid surface operators between the $B_n$ and $C_n$  theories. We clear up cause of  this problem.
We give the  construction and classification of  all the $B_n/C_n$ rigid surface operators which can not have a dual, revealing some subtle things.

In the appendix, we summarize revelent facts about all rigid surface operators and their associated invariants in the $SO(13)$ and $Sp(12)$ theories as examples.

\section{Surface operators in $\mathcal{N}=4$ Super-Yang-Mills}\label{surf}
In this section, we  introduce the revelent  backgrounds of surface operator.  We closely  follow paper \cite{Wy09} to which we refer the reader for more details.

We consider $\mathcal{N}=4$ super-Yang-Mills theory on $\mathbb{R}^4$ with coordinates $x^0,x^1,x^2,x^3$. The most important bosonic fields: a gauge field as 1-form, $A_\mu$ ($\mu=0,1,2,3$), six real scalars, $\phi_I$ ($I=1,\ldots,6$). All fields take values in the adjoint representation of the gauge group $G$. Surface operators  are introduced  by prescribing a certain singularity structure of  fields near the surface on which the operator is supported. Without loss of generality we can assume the support of the surface operator $D$ to be oriented  along the  $(x^0,x^1)$ directions. Since the fields satisfy the BPS condition,
 the combinations $A=A_2\, \D x^2 +A_3\, \D x^3$ and $\phi = \phi_2 \,\D x^2 + \phi_3 \,\D x^3$ must obey Hitchin's equations \cite{GW08}
\begin{equation}\label{hitch}
 F_A - \phi \wedge \phi = 0,\quad  \D_A\phi =0,\quad \D_A\star A = 0
\end{equation}
A surface operator is defined as   a solution to these equations with a prescribed singularity along the surface $\mathbb{R}^2(x^0,x^1)$.

 For the superconformal surface operator, setting $x_2+ix_3 = re^{i\tha}$,  the most general possible rotation-invariant Ansatz for $A$ and $\phi$ is
\begin{eqnarray}
A &=& a(r) \, \D \tha \,, \nonumber \\
\phi &=& -c(r) \, \D \tha + b(r) \frac{\D r }{r}  \,.
\end{eqnarray}
On substituting this Ansatz into Hitchin's equations (\ref{hitch}) and defining $s = -\ln r$ ,   equations (\ref{hitch}) reduces to Nahm's equations
\begin{eqnarray} \label{nahm}
\frac{\D a}{\D s} &=& [b,c]\,, \nonumber \\
\frac{\D b}{\D s} &=& [c,a] \,,\\
\frac{\D c}{\D s} &=& [a,b] \, \nonumber
\end{eqnarray}
which imply  the communication for the  constants $a$, $b$ and $c$. Surface operators of this type were discussed  in \cite{GW06}.

There is another way to obtain conformally invariant surface operator. Nahm's equations (\ref{nahm}) are solved with
\begin{equation} \label{nahmshou}
a = \frac{t_x}{s + 1/f}\,,\qquad b = \frac{t_z}{s + 1/f}\,,\qquad c = \frac{t_y}{s + 1/f} \,,
\end{equation}
 where $t_x,t_y$ and $t_z$ are elements of the lie algebra $\mathfrak{g}$, spanning a representation  of   $\mathfrak{su}(2)$. These  $t_i$'s are in  the adjoint representation of the gauge group. The surface operator is actually conformal invariant if the function $f$ allowed to fluctuate.

 Alternatively, the surface operators can be characterised    as the conjugacy class of the monodromy
\begin{equation}
U = P \exp(\oint \mathcal{A}) \,,
\end{equation}
where $\mathcal{A} = A + i \phi$. The integration is around a circle near $r=0$.  Following from (\ref{hitch}), one finds that $\mathcal{F} = \D \mathcal{A} + \mathcal{A}\wedge \mathcal{A}=0$, which means that $U$ is independent of   deformations of the integration contour. For the surface operators (\ref{nahmshou}), $U$ becomes
\begin{equation} \label{Uplusshou}
U= P \exp(\frac{2\pi}{s+1/f} \,\,  t_+ ) \,,
\end{equation}
where $t_+\equiv t_x +i t_y$ is nilpotent,  corresponding to  unipotent surface operator.

There are two types of conjugacy classes in a Lie group: unipotent and semisimple. Semisimple classes   can also lead to  surface operators. With a semisimple element $S$ , one can obtain a surface operator with monodromy $V=SU$. For a general  surface operator, it is constructed by requiring  all the fields which are solutions to Nahm's equations satisfy the following constraint  near the surface $D$(\cite{GW08})
\begin{equation} \label{Sun}
S \Psi(r,\tha) S^{-1} = \Psi(r,\tha+2\pi) \,.
\end{equation}

 From all the surface operators constructed from conjugacy classes, a subclass of surface operators called rigid surface operator is  closed on the $S$-duality.  The rigid surface operators are expected to be superconformal and not to depend on any parameters. A unipotent conjugacy classes is called rigid\footnote{The rigid surface operators  here  correspond to   strongly rigid operators in \cite{Wy09}. } if its dimension is strictly smaller than that of any nearby orbit. All rigid orbits have been classified \cite{GW08}\cite{CM93}.  A semisimple conjugacy classes $S$  is called rigid if the centraliser  of such class is larger than that of any nearby class. Summary, surface operators are called  rigid if they  based on monodromies of the form $V=SU$, where   $U$ is unipotent and rigid and $S$ is semisimple and rigid.

\subsection{Preliminary}\label{pre}
From the above discussions,  a classification of unipotent and semisimple conjugacy classes is needed to study surface operators.  Here we  describe the classification of rigid surface operators in the $B_n$($\SO(2n{+}1)$), $C_n$($\Sp(2n)$) and $D_n$($\SO(2n)$) theories in  detail.

The $t_+$ in  Eq.(\ref{Uplusshou}) can be described in  block-diagonal basis  as follows
 \begin{equation}
 \label{ti}
 t_+ = \left( \begin{array}{ccc} t_+^{n_1}  & & \\
                        & \ddots &   \\
                        & & t_+^{n_l}
 \end{array} \right ),
 \end{equation}
where $t_+^{n_k}$ is the `raising' generator of the $n_k$-dimensional irreducible representation of $\su(2)$. For the $B_n$, $C_n$ and $D_n$ theories, there are restrictions on the allowed dimensions of the $\su(2)$ irreps since $t_+$ should belong to the relevant gauge group. From the block-decomposition (\ref{ti}) we see that unipotent (nilpotent) surface operators  are classified by the restricted partitions.

A partition $\lambda$ of the positive integer $n$ is defined by  a decomposition $\sum_{i=1}^l \lambda_i = n$  ($\lambda_1\ge \lambda_2 \ge \cdots \ge \lambda_l$), where the $\lambda_i$ are called parts and  $l$  is the length.
 There is a one-to-one correspondence between partition and Young tableaux. For instance the partition $3^22^31$  corresponds to
\begin{equation}
\tableau{2 5 7}
\end{equation}
The another representation  of partition is $\lambda_m^{n_m}\lambda_{m-1}^{n_{m-1}}\cdots\lambda_1^{n_1}$ with  the length $l=\Sigma_i n_i$ as shown in Fig.(\ref{lm}).
Young diagrams  occur in a number of branches of mathematics and physics. They  are also useful to construct  the eigenstates of Hamiltonian System \cite{Sh11} \cite{{Sh14}} \cite{{Sh15}}.
\begin{figure}[!ht]
  \begin{center}
    \includegraphics[width=2.5in]{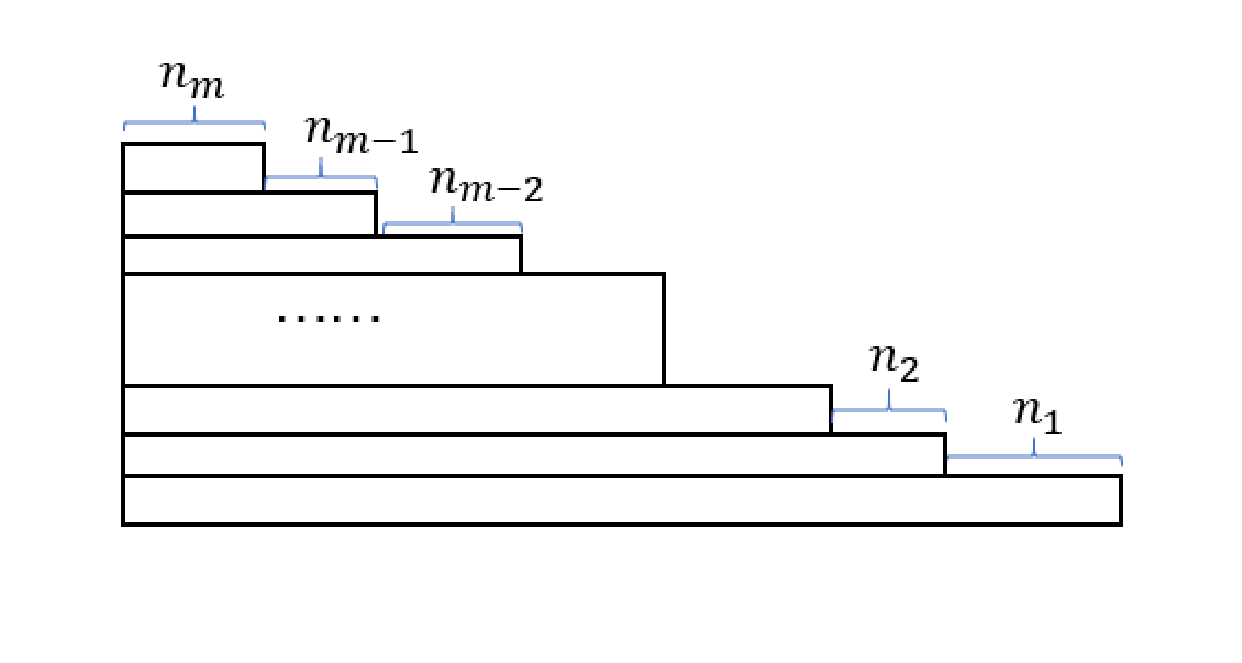}
  \end{center}
  \caption{ Partition  $\lambda_m^{n_m}\lambda_{m-1}^{n_{m-1}}\cdots\lambda_1^{n_1}$ with  the length $l=\Sigma_i n_i$.  }
  \label{lm}
\end{figure}
The addition of two partitions $\lambda$ and $\ka$ is defined by the additions of each part $\lambda_i+\ka_i$.

We have the following classification of nilpotent orbits in terms of partitions\cite{CM93}:
\begin{itemize}
  \item $(B_n)$: partitions of $2n+1$, $\sum \lambda_i=2n+1$, with a constraint that all even integers appear an even number of times;
  \item $(D_n)$: partitions of $2n$, $\sum \lambda_i=2n$, with a constraint that all even integers appear an even number of times;
  \item $(C_n)$: partitions of $2n$, $\sum \lambda_i=2n+1$, with a constraint that all odd integers appear an even number of times;
\end{itemize}
A partition in the $B_n$ or $D_n$($C_n$) theories is called {\it rigid} if it satisfies the following conditions,
\begin{enumerate}
  \item  no gaps (i.e.~$\lambda_i-\lambda_{i+1}\leq1$ for all $i$),
  \item no odd (even) integer appears exactly twice.
\end{enumerate}
Rigid partitions correspond to rigid surface operators. The following facts  are important for studying   rigid partitions, which  are easy to be proved and  omitted here \cite{Wy09}.
\begin{proposition}{\label{Pb}}
The longest row in a rigid  $B_n$ partition always contains an odd number of boxes. And the following two rows of the first row are either both of odd length or both of even length.  This pairwise pattern   then continues. If the Young tableau has an even number of rows the row of shortest length has to be even.
\end{proposition}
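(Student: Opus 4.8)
The plan is to reduce each of the three assertions to a statement about the multiplicities of the parts and then read these off from the defining conditions of a rigid $B_n$ partition: $\sum_i\lambda_i=2n+1$, every even value occurs with even multiplicity, and (rigidity) $\lambda_i-\lambda_{i+1}\le1$ while no odd value occurs exactly twice. Write $\lambda=p_1^{m_1}p_2^{m_2}\cdots p_r^{m_r}$ with $p_1>p_2>\cdots>p_r\ge1$ and all $m_j\ge1$; the no-gaps hypothesis forces $p_{j+1}=p_j-1$, so $\lambda$ is encoded by $p_1$ together with the vector $(m_1,\dots,m_r)$, and the length of a row changes parity precisely as one crosses a block boundary $m_1+\cdots+m_j$ with $1\le j<r$. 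The whole argument will be a short global parity count followed by an induction on $n$ carried out by peeling rows off the top of the diagram.

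First I would record the parity count. Each even part contributes an even amount to $\sum_i\lambda_i=2n+1$, so the number of odd-length rows is odd; since every even value occurs with even multiplicity, the even-length rows occur in even total number, hence the number of rows $l=\sum_j m_j$ is itself odd. This settles the final assertion — a rigid $B_n$ Young diagram never has an even number of rows, so that clause holds vacuously — and it also shows at least one odd part is present.

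For the first two assertions I would argue by descent. Peel the two longest rows off $\lambda$ and check that the result is again a rigid $B_{n'}$ partition with $n'=n-\lambda_1$: the total drops by $2\lambda_1$ and stays of the form $2n'+1$, the even values keep even multiplicity, no gap is opened, and no odd value acquires multiplicity exactly two. Tracking the parity of the longest row and of its multiplicity through this operation, together with the base case $l=1$, should give that $\lambda_1$ is odd and occurs with odd multiplicity. Peeling instead just the single first row, the surviving blocks $p_1^{m_1-1},p_2^{m_2},\dots$ then all have even size — equivalently every partial sum $m_1+\cdots+m_j$ with $j<r$ is odd — which is exactly the assertion that rows $2,3$ agree in parity, rows $4,5$ agree in parity, and so on, since a change of parity can occur only at a block boundary and the claim places every boundary between an odd-indexed and an even-indexed row. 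Assembling the three steps yields the proposition.

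The step I expect to be the main obstacle is the stability of the combinatorial type under peeling: verifying that removing top rows never breaks ``no odd value occurs exactly twice'' and never opens a gap, and in particular ruling out the borderline configurations at the top (an even longest row, an odd longest row of multiplicity four, an interior odd value of small odd multiplicity) by tracing each one back to a violation of rigidity of $\lambda$ itself. This is precisely where the rigidity hypotheses — rather than the bare nilpotency constraint — must be used at full strength, and where essentially all of the bookkeeping lives.
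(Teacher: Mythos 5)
Your attempt founders on the meaning of ``row''. In this paper (following Wyllard) the tableau attached to a partition has the parts labelling its \emph{columns}: the rank-16 example (\ref{rank16}) draws the $B_{16}$ partition $5\,4^2\,3^3\,2^4\,1^3$ as the tableau with rows $1,3,6,10,13$, i.e.\ the rows are the parts of the \emph{transposed} partition, and all the ``pairwise rows'' statements in Section \ref{symbol} use this convention. Under your reading (rows $=$ parts) the proposition is simply false for rigid partitions: $2^2 1^9$, which occurs as entry 3 of the table in Appendix \ref{tb} (and more generally the minimal orbit $2^2 1^{2n-3}$), is a rigid $B_n$ partition whose largest part is $2$ (even) and whose second and third parts are $2,1$ (opposite parity). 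Consequently the statements you aim to establish by descent --- largest part odd with odd multiplicity, parts $2,3$ of equal parity, all partial sums of multiplicities odd --- are not theorems, so the ``bookkeeping'' you defer in the peeling step cannot be carried out; moreover the peeling itself does not preserve rigidity (remove the two longest parts of the rigid $B_9$ partition $3^4 2^2 1^3$ and the odd part $3$ is left with multiplicity exactly two), and ``the total drops by $2\lambda_1$'' holds only when $\lambda_1=\lambda_2$. Finally, the last clause is not vacuous: the tableau has $\lambda_1$ rows, which may well be even.

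Under the intended convention the proposition is exactly the parity count you open with, transplanted to the transposed partition, and rigidity is not needed at all (the paper accordingly omits the proof as easy, citing Wyllard). Row $j$ of the tableau has length $\#\{i:\lambda_i\ge j\}$. The longest row therefore has length equal to the number of parts of $\lambda$, which is odd because $\sum_i\lambda_i=2n+1$ forces an odd number of odd parts while the even parts, each occurring with even multiplicity, contribute an even number of parts --- your first paragraph proves precisely this first claim, not the vacuity of the last one. Rows $2k$ and $2k+1$ differ in length by the multiplicity of the part $2k$, which is even, giving the pairwise statement; and if the tableau has an even number of rows, i.e.\ $\lambda_1=2k$ is even, the shortest row has length equal to the multiplicity of $\lambda_1$, again even, which is the boundary case of the same count. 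Rewriting your argument in these terms reduces the whole proposition to two lines and eliminates the induction entirely.
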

\begin{proposition}{\label{Pc}}
The longest two rows in a rigid $C_n$ partition both contain either  an even or an odd number  number of boxes.  This pairwise pattern   then continues. If the Young tableau has an odd number of rows the row of shortest length has contain an even number of boxes.
\end{proposition}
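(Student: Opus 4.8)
The plan is to read everything off the block decomposition of $\la$. Write $\la=(v_1^{m_1},\dots,v_r^{m_r})$ with $v_1>\dots>v_r$ and all $m_j\ge 1$, and set $s_0=0$, $s_j=m_1+\dots+m_j$, so that the rows of common length $v_j$ are exactly rows $s_{j-1}{+}1,\dots,s_j$. The ``no gaps'' hypothesis says precisely that $v_{j+1}=v_j-1$; hence the values $v_1,v_2,\dots$ strictly alternate in parity, and the parity of $\la_i$ changes exactly when $i$ crosses a block boundary $s_j$. With this in hand I would first reduce the Proposition to a single assertion, henceforth~(A): \emph{$s_j$ is even for every $j$ with $1\le j\le r-1$}. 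Indeed, rows $2k{-}1$ and $2k$ have unequal parity iff they lie in different blocks, i.e.\ iff $s_j=2k-1$ for some $j<r$, so ``the pairwise rows all have equal parity'' is equivalent to ``no $s_j$ with $j<r$ is odd'', which is~(A). And~(A) is the same as $m_1,\dots,m_{r-1}$ all being even, whence $l:=\sum_j m_j\equiv m_r\pmod 2$; since an odd part of a \cn\ partition occurs with even multiplicity, $v_r$ odd forces $m_r$ even and hence $l$ even, so $l$ odd forces $v_r=\la_l$ even. This settles the ``shortest row'' clause once~(A) is known.

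It remains to prove~(A), i.e.\ that every block above the last one has even multiplicity. For an odd value $v_j$ this is immediate from the defining constraint on \cn\ partitions, and such a block preserves the parity of the running total. The substantive case is an \emph{interior even block}: an index $j\le r-1$ with $v_j$ even. Here the rigidity hypothesis that no even integer occurs exactly twice is not enough by itself (it only excludes $m_j=2$), and one must use rigidity in full. I would argue by contradiction: let $j_0<r$ be least with $m_{j_0}$ odd; then $v_{j_0}$ is even, $s_{j_0-1}$ is even, and the block directly below has odd value $v_{j_0}-1$ and multiplicity $\ge 2$. Performing a suitable type-$C$ box contraction at or just below this block --- one changing the affected multiplicities only by even amounts, so that the \cn\ membership conditions survive --- produces a genuine \cn\ partition $\mu$ of strictly smaller size, and a short check with the explicit type-$C$ induction/collapse rules identifies $\cO_\la$ with the orbit induced from $\cO_\mu$ through a proper Levi of the form $\mathrm{GL}(1)\times\Sp(2(n{-}1))$. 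This contradicts the rigidity of $\cO_\la$; hence no interior even block of odd multiplicity occurs, every $m_j$ with $j<r$ is even, and~(A) follows.

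The hard part is exactly this last step --- excluding an interior even block of odd multiplicity --- since it is the only point at which rigidity, rather than mere membership in the \cn\ family, is used, and it is the same phenomenon that makes the companion statement for \bn, Proposition~\ref{Pb}, nontrivial. A purely combinatorial substitute for the induction-from-a-Levi argument is to verify the contrapositive directly: a no-gap \cn\ partition carrying an interior even part of odd multiplicity dominates, and is the $C$-collapse of the ``add one box to each of the two longest rows'' image of, a strictly smaller \cn\ partition --- the classical criterion for an induced orbit. Either way, once~(A) is established the two consequences isolated in the first paragraph complete the proof.
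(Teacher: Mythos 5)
Your reduction to assertion~(A) is clean, but (A) --- and with it the whole argument --- is false for rigid $C_n$ partitions, and it breaks exactly at the step you flagged as ``the hard part''. Take $\la=2^3\,1^6$ in $\mathfrak{sp}(12)$: it has no gaps and its only even part $2$ has multiplicity $3\neq 2$, so it is rigid by the paper's definition (it appears as entry 4, $(2^3\,1^6;\emp)$, in the appendix list of rigid $\Sp(12)$ operators), yet its first block is precisely an ``interior even block of odd multiplicity'' ($v_1=2$, $m_1=3$, $r=2$), so $s_1=3$ is odd and (A) fails. Hence your contradiction step --- that such a block always exhibits $\cO_\la$ as induced through a proper Levi $\mathrm{GL}(1)\times\Sp(2(n-1))$ --- cannot be correct; the ``short check with the type-$C$ induction/collapse rules'' is never actually carried out, and for this $\la$ it must fail, since the orbit is genuinely rigid. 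The same example kills the statement you are in fact proving: its third and fourth parts are $2$ and $1$, and it has $9$ parts with smallest part $1$, so under your reading (``rows'' $=$ parts $\la_i$) Proposition~\ref{Pc} itself would be false.

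The resolution is that in this paper the ``rows'' of the Young tableau are the columns of the standard diagram, i.e.\ the parts of the transpose $\la^{T}$: see the example where $3^22^21^2$ is drawn with rows $2,4,6$, and the map $WB$ example where $5\,4^2\,3^3\,2^4\,1^3$ is drawn with rows $1,3,6,10,13$, which is $\la^{T}$ listed in increasing order. With that reading the proposition has the one-line proof the paper alludes to when it omits it: $\la^{T}_{j}-\la^{T}_{j+1}$ equals the multiplicity of $j$ as a part of $\la$, so for odd $j$ it is even by the defining $C_n$ condition that odd parts occur with even multiplicity; hence $\la^{T}_{2k-1}\equiv\la^{T}_{2k}\pmod 2$ for every $k$, and if the number of rows, $\la_1$, is odd, then the shortest row has length equal to the multiplicity of the odd integer $\la_1$, which is again even. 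Note that rigidity is not needed at all --- only membership in the $C_n$ family is used --- which is why the paper can call these facts easy and omit the proof. You should recast your argument in terms of $\la^{T}$ rather than trying to repair the induced-orbit step.
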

\begin{proposition}{\label{Pd}}
The longest row in a rigid $D_n$ partition always contains an even number of boxes. And the following two rows are either both of even length or both of odd length. This pairwise pattern   then continues. If the Young tableau has an even number of rows the row of the shortest length has to be even.
\end{proposition}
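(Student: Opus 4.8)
The plan is to extract the parity of every row from the two multiplicity constraints together with the ``no gaps'' hypothesis. Since $\lambda_i-\lambda_{i+1}\le 1$, the distinct part-sizes of $\lambda$ are consecutive integers, so if we bundle equal rows into blocks the successive blocks have lengths differing by exactly one and hence of opposite parity; reading the Young tableau from top to bottom, the length-parity flips precisely at each block boundary. Write $a_v$ for the multiplicity of the value $v$. The $\SO(2n)$ constraint makes $a_v$ even for every even $v$; rigidity makes $a_v\ne 2$ for every odd $v$; and $\sum_v v\,a_v = 2n$. First I would reduce this last relation modulo $2$: every term with $v$ even dies, so $\sum_{v\text{ odd}}a_v$ is even, and since $\sum_{v\text{ even}}a_v$ is automatically even, the total number of rows is even (so the hypothesis of the last clause is in fact always satisfied) and the number of odd-length rows is even too.

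Next I would obtain the parity of the longest row and the pairing pattern together, by walking down the tableau block by block while keeping a running count of the rows used so far. The point is that the block boundaries occur exactly at the partial sums $a_{\lambda_1},\ a_{\lambda_1}+a_{\lambda_1-1},\ \dots$ of the multiplicities, and a pair $(\lambda_{2j},\lambda_{2j+1})$ can be of mixed parity only when one of these boundaries falls between positions $2j$ and $2j+1$; one then argues that the constraints ``$a_v$ even for even $v$'' and ``$a_v\ne 2$ for odd $v$'' force the even-size blocks to distribute the odd-size blocks in the only layout consistent with these parities, namely the one in which the first block is even-length (giving $\lambda_1$ even) and each subsequent pair of rows is of a single parity. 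The final assertion is then immediate: the rows split as the first row (even) together with a family of monochromatic pairs, each contributing an even number of boxes, plus --- when the number of rows is even --- one leftover shortest row; since $\sum\lambda_i = 2n$ is even, the leftover row is even.

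The hard part will be this middle step: pinning the longest row down to even length, i.e.\ controlling the parities of the partial multiplicity-sums so that a block boundary never lands inside a pair. This is exactly where the interplay of the two multiplicity constraints is needed, and it is also where a handful of small degenerate partitions --- those amounting to essentially a single block --- must be isolated and checked by hand. Once that bookkeeping is settled, the pairing and the last clause follow by routine parity arithmetic, and running the same count with total $2n+1$, respectively with the roles of ``odd'' and ``even'' interchanged, reproduces Propositions~\ref{Pb} and~\ref{Pc}.
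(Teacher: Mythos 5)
Your argument reads the proposition as a statement about the parts $\lambda_i$ themselves (rows $=$ parts), but in this paper --- as in Wyllard's, from which the statement is taken --- the parts of the partition are the \emph{columns} of the displayed Young tableau: the ``rows'' in Propositions \ref{Pb}--\ref{Pd} are the parts of the conjugate partition, i.e.\ the $j$th row has length $\mu_j=\#\{i:\lambda_i\ge j\}$. (This is visible in the worked example where the $B_{16}$ partition $5\,4^2\,3^3\,2^4\,1^3$ is drawn with rows $1,3,6,10,13$, and it is forced by correctness: under your reading the statement is false. For instance $\lambda=1^{2n}$, $n\ge 2$, is a rigid $D_n$ partition whose longest ``row'' would be $1$, and $\lambda=2^2\,1^4$ is a rigid $D_4$ partition whose second and third ``rows'' would be $2$ and $1$, of opposite parity.) Because of this, the middle step you defer --- forcing the first block to be of even length and every subsequent pair of rows to be monochromatic --- cannot be completed: the counterexamples above satisfy all of your hypotheses. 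Your claim that the number of rows is always even is also off: under the correct reading the number of rows equals $\lambda_1$, which can be odd (so the last clause is genuinely conditional), while under your reading it would make the last clause assert that the smallest part is always even, which $1^{2n}$ refutes.

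Once the statement is read on the conjugate, your mod-$2$ bookkeeping with the multiplicities $a_v$ is exactly the right ingredient and the proof becomes the one-liner the paper has in mind when it omits it: $\mu_j=\sum_{v\ge j}a_v$, so consecutive rows differ by $\mu_j-\mu_{j+1}=a_j$. Then (i) $\mu_1$ is even, because the number of odd parts is congruent mod $2$ to $\sum_i\lambda_i=2n$ and the number of even parts is even term by term; (ii) $\mu_{2j}$ and $\mu_{2j+1}$ have the same parity because their difference $a_{2j}$ is even by the $D_n$ multiplicity condition, which gives the pairwise pattern; (iii) if the number of rows $\lambda_1$ is even, the shortest row is $\mu_{\lambda_1}=a_{\lambda_1}$, again even by the same condition. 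Note that rigidity (no gaps, no odd part appearing exactly twice) is never used --- the statement holds for every $D_n$ partition --- so the block decomposition you build from the no-gap condition, and the case analysis of small ``single-block'' partitions, are detours. The same remark applies verbatim to your proposed rederivations of Propositions \ref{Pb} and \ref{Pc}.
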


The  rigid semisimple conjugacy classes $S$  in  formula (\ref{Sun}) correspond to diagonal matrices with elements $+1$ and $-1$ along the diagonal in the $B_n$ , $C_n$ and $D_n$ theories\cite{GW08}. The   matrices $S$ break the gauge group to its centraliser at the Lie algebra level as follows
\begin{eqnarray}
\so(2n{+}1) &\rar& \so(2k{+}1)\oplus \so(2n-2k) \,, \nonumber \\
\so(2n) &\rar& \so(2k)\oplus \so(2n-2k) \,, \\
\spl(2n) &\rar& \spl(2k)\oplus \spl(2n-2k) \,, \nonumber
\end{eqnarray}
which  imply  that the rigid semisimple surface operators  correspond to pairs of partitions  $(\lambda';\lambda'')$  in the $B_n$, $C_n$, and $D_n$ \cite{GW08}.  $\lambda'$ is a rigid  $B_k$ partition and  $\lambda''$ is a  rigid $D_{n-k}$ partition in the $B_n$ case.     $\lambda'$ is a rigid $D_k$ partition and  $\lambda''$ is a rigid $D_{n-k}$ partition in the $D_n$ case.  $\lambda'$ is a  rigid $C_k$ partition and  $\lambda''$ is a  rigid $C_{n-k}$ partition in the $C_n$ case.  The rigid unipotent surface operator is a  limiting case of rigid semisimple surface operator with $\lambda''=0$. \footnote{Without confusions, the rigid semisimple surface operators will be called rigid surface operator or surface operator in this study.}

There is a close relationship between the pair of partition $(\lambda';\lambda'')$  and Weyl group.
 For Weyl groups in the $B_n$ , $C_n$,  and $D_n$ theories  both conjugacy classes and irreducible unitary representations are in one-to one correspondence with ordered pairs of partitions $[\al;\bet]$, where  $\al$ is a partition of $n_\al$ and $\bet$ is a partition of $n_{\beta}$, with    $n_\al+n_\bet= n$.
Though both  the conjugacy classes and unitary representations are parameterised by ordered pair of partitions there is no canonical isomorphism between the two sets.

The Kazhdan-Lusztig  map is a  map from the unipotent conjugacy classes of a simple group to the set of conjugacy classes of the Weyl group.
This map can be extended to the case of rigid semisimple conjugacy classes~\cite{Sp92}.
The Springer correspondence is a injective map from the unipotent conjugacy classes of a simple group to the set of unitary representations of the Weyl group. For the classical groups the above  two maps can be described explicitly by the invariants  \textit{fingerprint}  and \textit{symbol} of partitions \cite{CM93}, respectively.

Without explanation,  we only concern   about rigid partition and rigid surface operator  in the following sections.

\subsection{Invariants of surface operators} \label{inv}
Invariants of the surface operators $(\lambda';\lambda'')$  do not change under the $S$-duality map \cite{Wy09} \cite{GW08}.

The dimension $d$  is the most basic invariant of a rigid surface operator. It is calculated as follows \cite{GW08}\cite{CM93}:
\begin{eqnarray}
B_n: & d = 2n^2 + n -\half \sum_{k} (s_k')^2 -  \half \sum_{k} (s_k'')^2
+ \half \sum_{k\;\mathrm{odd}} r_k'+ \half \sum_{k\;\mathrm{odd}} r_k'' \,,\nonumber \\
D_n: & d =2n^2 - n -\half \sum_{k} (s_k')^2 -  \half \sum_{k} (s_k'')^2
+ \half \sum_{k\;\mathrm{odd}} r_k'+ \half \sum_{k\;\mathrm{odd}} r_k''\,, \\
C_n: & d = 2n^2 + n -\half \sum_{k} (s_k')^2 -  \half \sum_{k} (s_k'')^2
- \half \sum_{k\;\mathrm{odd}} r_k'- \half \sum_{k\;\mathrm{odd}} r_k''\,, \nonumber
\end{eqnarray}
where $s'_k$ denotes the number of parts of $\lambda'$'s that are larger than or equal to $k$. And $r_k'$ denotes the number of parts of $\lambda'$ that are equal to $k$. Similarly,  $s_k''$ and $r_k''$ correspond to $\lambda''$.

The invariant  {\it fingerprint } is constructed from $(\lambda';\lambda'')$ via the Kazhdan-Lusztig map. This invariant is a pair of partitions $[\al;\bet]$ associated with  the Weyl group conjugacy class.

There is another invariant {\it symbol} based on the Springer correspondence,  which  can be extended  to rigid semisimple conjugacy classes.  One can construct the symbol of this rigid semisimple surface operator $(\lambda';\lambda'')$ by calculating the symbols for both $\lambda'$ and $\lambda''$,  then add the entries that are `in the same place' of these two partitions. The  result symbol is  denoted as follows
\begin{equation}\label{ddddr}
  \sigma((\lambda^{'};\lambda^{''}))=\sigma(\lambda^{'})+\sigma(\lambda^{''}).
\end{equation}
An example illustrates the addition rule in detail:
\begin{equation} \label{symboladd}
\left(\begin{array}{@{}c@{}c@{}c@{}c@{}c@{}c@{}c@{}c@{}c@{}c@{}c@{}c@{}c@{}} 0&&0&&0&&0&&0&&1&&1 \\ & 1 && 1 && 1 &&1&&1&&2 & \end{array} \right) +
 \left(\begin{array}{@{}c@{}c@{}c@{}c@{}c@{}c@{}c@{}c@{}c@{}c@{}c@{}} 0&&0&&0&&1&&1&&1 \\ & 1 && 1 &&1&&1&&1 & \end{array} \right)=
\left(\begin{array}{@{}c@{}c@{}c@{}c@{}c@{}c@{}c@{}c@{}c@{}c@{}c@{}c@{}c@{}} 0&&0&&0&&0&&1&&2&&2 \\ & 1 && 2 && 2 &&2&&2&&3 & \end{array} \right).
\end{equation}
 It is checked that the symbol of a rigid surface operator contains the same amount of information as the fingerprint \cite{Wy09}. Compared  with the fingerprint invariant, the symbol invariant is much easier to be calculated and  more    convenient   to  find the $S$-duality maps of surface operators.

In \cite{GW08}, it was pointed that two discrete quantum numbers 'center' and 'topology'  are interchanged under $S$ duality. A surface operator can detect topology then its dual should detect the centre and vice versa. However, there are some puzzles using these  discrete quantum numbers to find duality pair   \cite{Wy09}.
There is another problem that   the number of rigid surface operators in the $B_n$ theory is larger than that   in the $C_n$ theory \cite{Wy09}, which  was first observed in the $B_4$/$C_4$ theories \cite{GW08}.
In this paper,  we  ignore the first problem  for the moment. We focus on the symbol invariant to  study the second problem of rigid surface operators between the dual theories. Hopefully, our works will be helpful in making new insight to the surface operator.

\section{Contributions to symbol of rows of partition }\label{symbol}
In this section,  we discuss the contributions to symbols  of  rows of   partitions,  refining  the construction  given in \cite{Shou-sc}.
As applications, we analyse  the $S$ duality maps proposed in  \cite{Wy09},
with a  preparation for the study of the mismatch problem in the $S$-duality map between the number of rigid surface operators in the $B_n$ and $C_n$ theories  in   the next section.

\subsection{Symbol invariant of partitions}

In \cite{Shou-sc}, we  proposed equivalent definitions of symbols for the partitions in the    $C_n$  and $D_n$ theories,  which  are  consistent with that in the $B_n$ theory  as   much as possible.
\begin{Def}\label{dddd}\cite{Shou-sc}
\begin{itemize}
  \item  Symbol  of a partition $\lambda$ in the $B_n$ theory: firstly  add $l-k$  to the $k$th part of the partition $\lambda$.
Then arrange the odd parts and the even parts of the sequence $l-k+\lambda_k$  in  increasing sequences $2f_i+1$ and  $2g_i$, respectively.
Next  calculate the terms
 \begin{equation}\label{aaff}
   \al_i = f_i-i+1\quad\quad\quad  \bet_i = g_i-i+1.
 \end{equation}
 Finally   write the {\it symbol} as
\begin{equation*}
  \left(\ba{@{}c@{}c@{}c@{}c@{}c@{}c@{}c@{}} \al_1 &&\al_2&&\al_3&& \cdots \\ &\bet_1 && \bet_2 && \cdots  & \ea \right).
\end{equation*}

  \item  Symbol  of a partition $\lambda$ in the $C_n$ theory: \begin{description}
                                \item[1]If the length of partition is even,  we compute the symbol as in the $B_n$ case. And then append an extra 0 on the left of the top row of the symbol.
                                \item[2]  If the length of the partition is odd, we append an extra 0 as the last part of the partition.  And then compute the symbol as in the $B_n$ case. Finally,  we delete a 0  in the first entry of the bottom row of the symbol.
                              \end{description}
   \item  Symbol  of a partition $\lambda$ in the $D_n$ theory: we append an extra 0 as the last part of the partition and then compute the symbol as in the $B_n$ case. We  delete  two 0's which occupy the first two entries of the bottom row of the symbol.
\end{itemize}

\begin{rmk}
   Note that the terms $\alpha_*$ in formula (\ref{aaff}) are  related to $f_*$  while the terms $\beta_*$ are  related to $g_*$.
   \end{rmk}

\end{Def}

\begin{figure}[!ht]
  \begin{center}
    \includegraphics[width=4in]{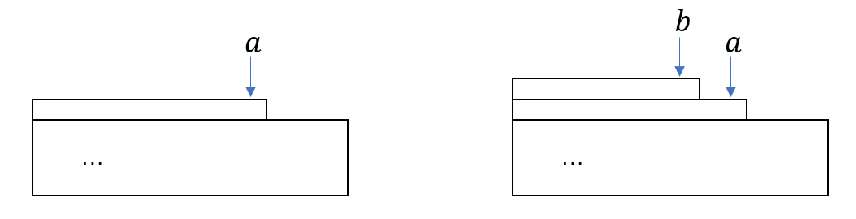}
  \end{center}
  \caption{Addition of  an even row $b$ on the left partition leads to a different partition but in the same theory.  }
  \label{ppf}
\end{figure}
In  \cite{Shou-sc},   we determined  the contribution to symbol for each row of a  partition in the  $B_n(t=-1)$, $C_n(t=0)$ and $D_n(t=1)$ theories, which is  given by Table \ref{tsy}. The table is determined by studying the variation  of  the symbol after adding a row as shown in Fig.(\ref{ppf}). The contribution to symbol of a odd row is calculated  formally (Note that the addition of a odd row leads to a partition in a different theory.  While  the addition of an odd row pairwise rows leads to a partition in the same theory.).
This construction of symbol avoid the boring calculation by  using \textbf{Definition} \ref{dddd},  as shown in the  following example.
\begin{flushleft}
\textbf{Example:}  Symbol of the partition  $\lambda=3^22^21^2$ in the $D_n$ theory,
\end{flushleft}
\begin{equation}
 \tableau{2 4 6}.
\end{equation}
According to Table \ref{tsy},  the symbol is
\begin{equation} \label{1}
\sigma_{{(3^22^21^2)}_{D}}=\left(\begin{array}{@{}c@{}c@{}c@{}c@{}c@{}c@{}}
1&&1&&1 \\ &0&&0& \end{array} \right) +
\left(\begin{array}{@{}c@{}c@{}c@{}c@{}c@{}c@{}} 0&&0&&0 \\ &1&&1& \end{array} \right)+
\left(\begin{array}{@{}c@{}c@{}c@{}c@{}c@{}c@{}} 0&&0&&1 \\ &0&&0& \end{array} \right),
\end{equation}
where the superscript $D$ indicates  that it is  a  partition  in the $D_n$ theory.

\begin{table}
\begin{tabular}{|c|c|c|c|}\hline
Parity of the length of the $i$th row & Parity of $i+t+1$ &  Contribution &  $L$ \\ \hline
odd & even & $\Bigg(\!\!\!\ba{c}0 \;\; 0\cdots \overbrace{ 1\;\; 1\cdots1}^{L} \\
\;\;\;0\cdots 0\;\; 0\cdots 0 \ \ea \Bigg)$   &  $\frac{1}{2}(\sum^{m}_{k=i}n_k+1)$\\ \hline
even & odd & $\Bigg(\!\!\!\ba{c}0 \;\; 0\cdots \overbrace{ 1\;\; 1\cdots1}^{L} \\
\;\;\;0\cdots 0\;\; 0\cdots 0 \ \ea \Bigg)$   &  $\frac{1}{2}(\sum^{m}_{k=i}n_k)$  \\ \hline
even & even &  $\Bigg(\!\!\!\ba{c}0 \;\; 0\cdots 0\;\; 0 \cdots 0 \\
\;\;\;0\cdots \underbrace{1 \;\;1\cdots 1}_{L} \ \ea \Bigg)$  &   $\frac{1}{2}(\sum^{m}_{k=i}n_k)$  \\ \hline
odd & odd    & $\Bigg(\!\!\!\ba{c}0 \;\; 0\cdots 0\;\; 0 \cdots 0 \\
\;\;\;0\cdots \underbrace{1\; \;1\cdots 1}_{L} \ \ea \Bigg)$   &  $\frac{1}{2}(\sum^{m}_{k=i}n_k-1)$  \\ \hline
\end{tabular}
\caption{ Contribution to  symbol of the $i$\,th row   of  the partition   $\lambda_m^{n_m}\lambda_{m-1}^{n_{m-1}}\cdots\lambda_1^{n_1}$. It  depend on the parity of the length of row,  the parameter $L$, and the parity of $i+t+1$  with $t=-1$, $t=0$, and $t=1$ for the partitions in the $B_n$, $C_n$, and $D_n$ theories, respectively. }
\label{tsy}
\end{table}

Using the Table  \ref{tsy},   the calculation of symbol invariant of partition become the combination of  of blocks.  We can  further refine the construction of symbol. Firstly,  we study the contribution to symbol of each row of a pairwise rows of partitions  in different theories. And then we  study  the contribution to symbol of a pairwise rows of  a partition.

For the first step,  we study the contributions to symbol of a row   with  the same location in  a pairwise rows  of      partitions in   different theories.
\begin{figure}[!ht]
  \begin{center}
    \includegraphics[width=4in]{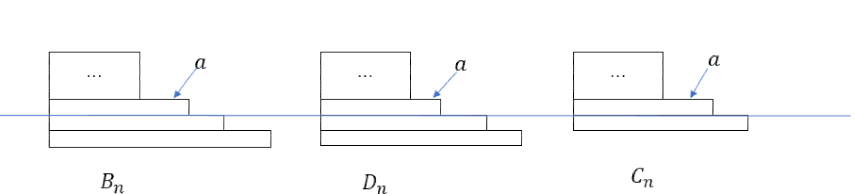}
  \end{center}
  \caption{ Row $a$ is  the top row of   a pairwise rows of partitions in the $B_n$, $D_n$, and $C_n$ theories using Propositions \ref{Pb}, \ref{Pc}, and \ref{Pd}. The row $a$ has the same contribution to symbol in different theories.}
  \label{aa}
\end{figure}
The row $a$ in Fig.(\ref{aa}) is the top row of a pairwise rows  in the $B_n$, $C_n$, and $D_n$ theories according to Propositions \ref{Pb}, \ref{Pc}, and \ref{Pd}.
 \begin{itemize}
   \item  If the  length of the row $a$ is  $2n+1$,   according to  Table \ref{tsy}, its contributions to symbol is
\begin{eqnarray*}
\Bigg(\!\!\!\begin{array}{c} 0 \;\; 0\cdots 0 \;\; {0\cdots0}  \\
\, 0\cdots 0\;\;  \underbrace{1\cdots 1}_{n}\end{array} \!\!\!\Bigg),
 \end{eqnarray*}
 which are the same in different theories.
   \item If the  length of the row $a$ is  $2n$,  according to  Table \ref{tsy}, its contributions to symbol is
\begin{eqnarray*}
\Bigg(\!\!\!\begin{array}{c} 0 \;\; 0\cdots 0 \;\; \overbrace{{1\cdots1}}^{n}  \\
\, 0\cdots 0\;\;  {0\cdots 0}\end{array} \!\!\!\Bigg),
 \end{eqnarray*}
  which are the same in different theories.
 \end{itemize}

 Similarly,  if the row $a$ is at the bottom  of a pairwise rows of a partition, its contribution to symbol is  the same  in different theories.

 Summary,   the same location  of a row in a pairwise rows  partition leads to the same  contribution to symbol  in different theories.

\begin{figure}[!ht]
  \begin{center}
    \includegraphics[width=4in]{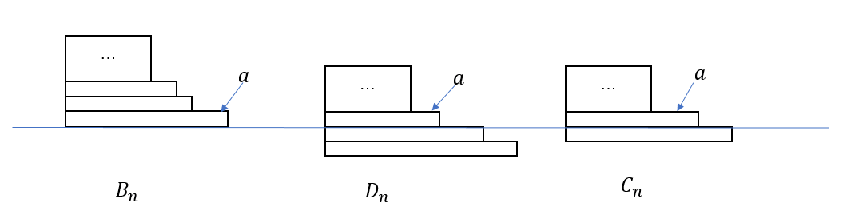}
  \end{center}
  \caption{The first row $a$ of a partition in the $B_n$ theory can be regarded as the top row of an  odd  pairwise rows. With the same location of a pairwise rows,  the row $a$ has the same contribution  to symbol  in  the  $B_n$, $D_n$,  and $C_n$ theories.}
  \label{aaa}
\end{figure}
Secondly, we study the contribution to symbol of the first  row   of      partitions in the   $B_n$  and $D_n$  theories which  do not belong to  a pairwise rows   according to Propositions \ref{Pb} and \ref{Pd}.
Let the row $a$ is the first row of a partition in the $B_n$ theory with length $2n+1$  as shown in Fig.(\ref{aaa}).
According to  Table \ref{tsy}, it  has a contribution to symbol  as follows
\begin{eqnarray}
\Bigg(\!\!\!\begin{array}{c} 0 \;\; 0\cdots 0 \;\; {0\cdots0}  \\
\, 0\cdots 0\;\;  \underbrace{1\cdots 1}_{n}\end{array} \!\!\!\Bigg)
 \end{eqnarray}
which is the same as the contribution to symbol of  the top row   of  an odd  pairwise rows  in the $B_n$, $C_n$, and $D_n$ theories with  length $2n+1$.  Thus we claim that  the first row of a  partition in the $B_n$ theory  can be regarded  as the top row of  an odd  pairwise rows.
\begin{figure}[!ht]
  \begin{center}
    \includegraphics[width=4in]{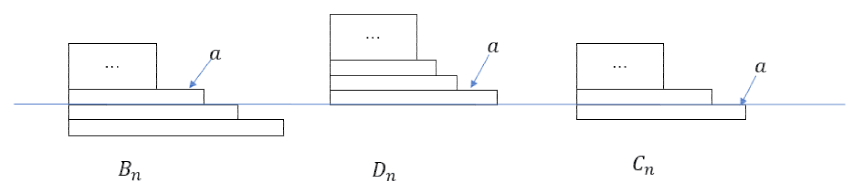}
  \end{center}
  \caption{ The first row  $a$ in the $D_n$ theory can be regarded as the top row of an odd pairwise rows.   With the same location in a pairwise rows,  the row $a$ have the same contributions to symbols  in  the  $B_n$, $D_n$,  and $C_n$ theories.}
  \label{ddd}
\end{figure}

Similarly, we find that the first row of a partition in the $D_n$ theory  can be regarded  as the top row of an even  pairwise rows.  As shown in Fig.(\ref{ddd}),
 the row $a$ with length $2n$  has a contribution to symbol in the $D_n$ theory as follows
\begin{eqnarray}
\Bigg(\!\!\!\begin{array}{c} 0 \;\; 0\cdots 0 \;\; \overbrace{{1\cdots 1}}^{n}  \\
\, 0\cdots 0\;\;  {0\cdots 0}\end{array} \!\!\!\Bigg),
 \end{eqnarray}
which is the same as the contribution to symbol of  the top row with the same length  of  an even  pairwise rows  in the $B_n$,  $C_n$, and $D_n$  theories  according to Table \ref{tsy}.

From the above discussions,  we get the following concise proposition.
\begin{proposition}\label{pair-row}
 With the same location in a pairwise rows of a partition,  one row   has the same  contribution to symbol for partitions in  the  $B_n$, $D_n$,  and $C_n$ theories.
\end{proposition}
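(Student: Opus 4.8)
The plan is to reduce Proposition \ref{pair-row} to a direct case check against Table \ref{tsy}, since the table already encodes the contribution of the $i$th row as a function of only two data: the parity of the row length, and the parity of $i+t+1$, where $t=-1,0,1$ for $B_n,C_n,D_n$ respectively. The key observation is that ``being the first versus second row of a pairwise block, and whether the block is an odd- or even-indexed block'' is precisely what pins down the parity of the row index $i$; so if I can show that for a row occupying a fixed position in the pairwise structure the quantity $i+t+1$ has the \emph{same parity} in all three theories, then by Table \ref{tsy} both the value $L$ and the shape of the contribution (top row of $1$'s versus bottom row of $1$'s) agree, and the proposition follows. The one remaining subtlety is that $L$ also depends on $\sum_{k=i}^m n_k$, i.e.\ on the total number of parts weakly below row $i$; I would note that this tail sum is an intrinsic feature of the Young diagram below the chosen row and does not reference the theory, so once the parity of $i$ is fixed the formula for $L$ is literally the same symbol-valued expression in each case.

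Concretely, I would organize the argument around the pairing conventions established in Propositions \ref{Pb}, \ref{Pc}, \ref{Pd}. First I would treat the ``generic'' pairwise blocks: by those propositions, in the $B_n$ theory the pairing groups rows $2,3$, then $4,5$, etc.\ (the first row being special), while in the $C_n$ and $D_n$ theories the pairing groups rows $1,2$, then $3,4$, etc. So a ``top row of the $j$th pairwise block'' sits at index $i=2j$ in $B_n$ and at index $i=2j-1$ in $C_n,D_n$; then $i+t+1$ equals $2j-1+1=2j$ in $B_n$ (with $t=-1$) and $2j-1+0+1=2j$ in $C_n$, $2j-1+1+1=2j+1$... — here I must be careful and actually recompute, because a naive count shows the parities need not literally coincide, which is exactly why the Figures \ref{aa}, \ref{aaa}, \ref{ddd} in the text reinterpret the first row of $B_n$ (resp.\ $D_n$) as the second (resp.\ top) row of a ``phantom'' pairwise block obtained from the $C_n/D_n$ padding convention in Definition \ref{D2}. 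So the honest version of the proof is: use Definition \ref{D2}'s rule of appending an extra $0$ part for $C_n$ (odd length) and $D_n$, which shifts the effective row index and the effective length parity so that the padded partition's pairwise structure aligns row-for-row with the $B_n$ structure; then invoke Table \ref{tsy} once.

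The main steps, in order, would be: (1) fix a row and describe its ``position'' invariantly as ``$s$th row, counted from the special/padded end, lying in pairwise block number $j$ as top or bottom member''; (2) for each of $B_n,C_n,D_n$ compute the actual row index $i$ of that row in the (possibly padded) partition, using Definition \ref{D2} and Propositions \ref{Pb}--\ref{Pd}; (3) verify that $i+t+1$ has the same parity in all three theories, and that the length of the row has the same parity in all three (this is forced by the pairing propositions: paired rows share length parity); (4) read off from Table \ref{tsy} that the triple (length parity, parity of $i+t+1$, value of $L$) is identical, hence the contribution block is identical; (5) separately handle the four worked instances already displayed in the text (row of length $2n+1$ as bottom of a block; length $2n$ as top of a block; first row of $B_n$; first row of $D_n$) as sanity checks / base cases of the induction on block number. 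The hard part will be step (2)–(3): getting the index bookkeeping exactly right across the padding conventions, since an off-by-one in how ``append a $0$'' interacts with ``delete $0$'s from the bottom row of the symbol'' is precisely where the $B_n$/$C_n$ mismatch lives; I expect everything else to be a mechanical table lookup.
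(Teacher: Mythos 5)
Your overall strategy is the same as the paper's: the paper does not give a formal proof either, but establishes the proposition by exactly the case check you describe — read off Table \ref{tsy}, note that for a row in a fixed position of the pairwise structure both the length parity and the parity of $i+t+1$ (hence the case and the value of $L$) are theory-independent, and treat the first rows of $B_n$ and $D_n$ as degenerate members of the pairwise structure (Figs.~\ref{aa}, \ref{aaa}, \ref{ddd}, summarized in Table \ref{tsyn}). However, the place where you defer the work, steps (2)--(3), is exactly where your plan as written goes wrong. You assign the pairing $(1,2),(3,4),\dots$ to both $C_n$ and $D_n$; but by Proposition \ref{Pd} the first row of a rigid $D_n$ partition is unpaired and the pairwise blocks are $(2,3),(4,5),\dots$, just as in $B_n$ (Proposition \ref{Pb}). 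With the correct reading, the top member of block $j$ sits at $i=2j$ in $B_n,D_n$ and $i=2j-1$ in $C_n$, giving $i+t+1=2j,\,2j+2,\,2j$ respectively — all even — and the bottom member gives odd parity in all three theories, so the table lookup closes directly and the ``naive parity mismatch'' you encountered is an artifact of the misread pairing, not a real obstacle.

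Your proposed repair — invoking the padding rule of Definition \ref{D2} so that appending a $0$ ``shifts the effective row index'' and aligns the pairwise structures — does not hold up and is not what the paper does. The extra $0$ is appended as the \emph{last} part, so it shifts no index of the existing rows; moreover Table \ref{tsy} already absorbs the theory dependence into the parameter $t$, so padding the partition and then applying the table would double-count the convention. The first rows of $B_n$ and $D_n$ do need separate treatment (they belong to no pairwise block), but the paper handles them by a direct table lookup at $i=1$ and the observation that their forced length parity (odd for $B_n$, even for $D_n$, by Propositions \ref{Pb} and \ref{Pd}) lands them in the same case of the table as a paired row in the corresponding position — this is the content of Figs.~\ref{aaa} and \ref{ddd}, and it is how your ``phantom block'' idea should be made precise. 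With those two corrections your argument becomes the paper's argument; the remaining table lookups are indeed mechanical.
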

\begin{flushleft}
The form of the contribution to symbol of a row of a partition is  shown in Table {\ref{tsyn}}.
\end{flushleft}
\begin{table}
\begin{tabular}{|c|c|c|}\hline
 Location in a pairwise rows & Length   & Contribution    \\ \hline
 top & $2n+1$ & $\Bigg(\!\!\!\ba{c}0 \;\; 0\cdots 0\;\; 0 \cdots 0 \\
\;\;\;0\cdots \underbrace{1 \;\;1\cdots 1}_{n} \ \ea \Bigg)$   \\ \hline
 bottom  & $2n+1$ & $\Bigg(\!\!\!\ba{c}0 \;\; 0\cdots \overbrace{ 1\;\; 1\cdots1}^{n+1} \\
\;\;\;0\cdots 0\;\; 0\cdots 0 \ \ea \Bigg)$    \\ \hline
 bottom  & $2m$  &  $\Bigg(\!\!\!\ba{c}0 \;\; 0\cdots 0\;\; 0 \cdots 0 \\
\;\;\;0\cdots \underbrace{1 \;\;1\cdots 1}_{m} \ \ea \Bigg)$  \\ \hline
 top    & $2m$&  $\Bigg(\!\!\!\ba{c}0 \;\; 0\cdots \overbrace{ 1\;\; 1\cdots1}^{m} \\
\;\;\;0\cdots 0\;\; 0\cdots 0 \ \ea \Bigg)$      \\ \hline
\end{tabular}
\caption{ Contribution to  symbol of a row of a partition  in  the  $B_n$, $D_n$,  and $C_n$ theories.}
\label{tsyn}
\end{table}

\begin{figure}[!ht]
  \begin{center}
    \includegraphics[width=4in]{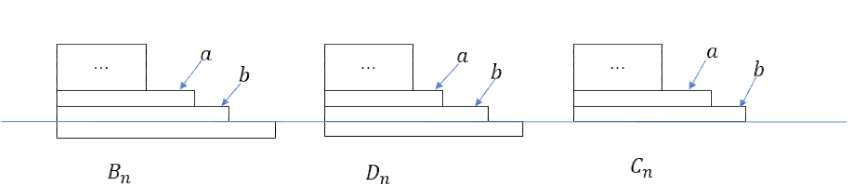}
  \end{center}
  \caption{ Rows $a$ and $b$ form a pairwise rows. The pairwise rows  has the same contribution to symbol in different theories.}
  \label{ab}
\end{figure}
As the applications of Proposition \ref{pair-row}, we study the contributions to symbol of a pairwise rows of a partition. As shown in Fig.(\ref{ab}), the  rows $a$ and $b$ of an odd   pairwise rows    have  the lengths of  $2n+1$ and  $2m+1$, respectively.
According to Table \ref{tsyn},  the pairwise rows has the   contributions to symbol   as follows,
$$\Bigg(\!\!\!\begin{array}{c} 0\;\;0\cdots\cdots 0 \;\;\overbrace{ 1\cdots\cdots 1}^{m+1} \\
\;0 \cdots 0 \;\;\underbrace{1 \cdots 1}_{n}  \ \end{array} \Bigg),
$$
which are the same in the $B_n$, $D_n$, and $C_n$ theories.
Similarly, if the length of $a$ is  $2n$ and the length of $b$  is $2m$,  they have the   contributions to symbol   as follows,
$$\Bigg(\!\!\!\begin{array}{c} 0\;\;0\cdots\cdots 0 \;\; \overbrace{1\cdots\cdots 1}^{n} \\
\;0 \cdots 0 \;\;\underbrace{1 \cdots 1}_{m}  \ \end{array} \Bigg),
$$
which are the same in the $B_n$, $D_n$, and $C_n$ theories.

Summary, we get the following lemma.
\begin{lemma}\label{pair}
A pairwise rows of  partitions   in  the  $B_n$, $D_n$,  and $C_n$ theories has the same  contributions to symbol.
\end{lemma}

\begin{figure}[!ht]
  \begin{center}
    \includegraphics[width=4in]{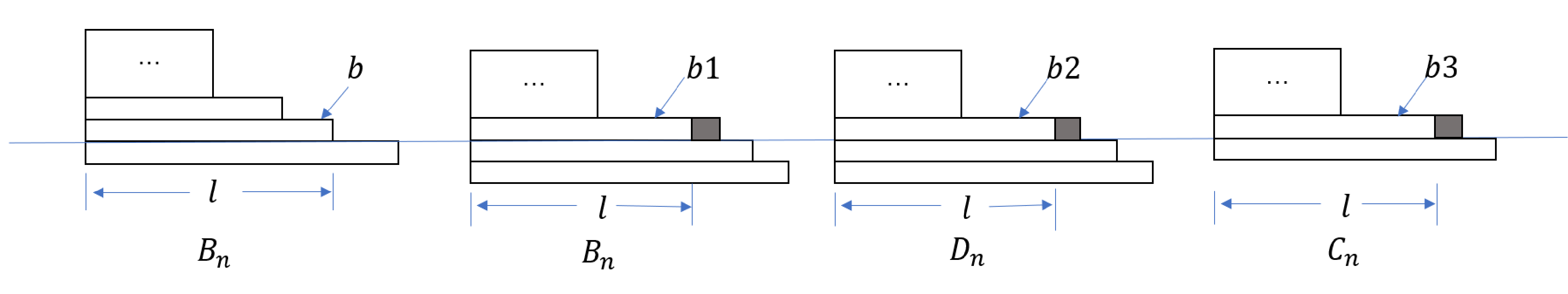}
  \end{center}
  \caption{ Gray boxes are appended at the end of row.  The length of  $b$  is $l$ and the lengths of $b1$, $b2$ and  $b3$ are $l+1$.   The rows $b$, $b1$, $b2$, and $b3$ have  the same contributions to symbol.}
  \label{abab}
\end{figure}
Now  we study the rows of partitions which  have    the same contribution to symbol with different  lengths.  According to Table \ref{tsyn},  the bottom row of an odd pairwise rows has the same contribution to symbol as that of the top row of an even  pairwise rows with one more box. Examples are   shown in Fig.(\ref{abab}). Without an explanation, the gray boxes  denote   the box appended and the black  boxes  denote  the boxes omitted   in the following sections.
The contribution to symbol of the row $b$ with length $2n+1$ in the $B_n$ theory is
\begin{eqnarray}
\Bigg(\!\!\!\begin{array}{c} 0 \;\; 0\cdots 0 \;\;\overbrace{ {1\cdots 1}}^{n+1}  \\
\, 0\cdots 0\;\;  {0\cdots 0}\end{array} \!\!\!\Bigg)
 \end{eqnarray}
which is the same as the contributions of the rows   $b1$,  $b2$, and $b3$ in the $B_n$, $C_n$, and $D_n$  theories, respectively.

\begin{figure}[!ht]
  \begin{center}
    \includegraphics[width=3.5in]{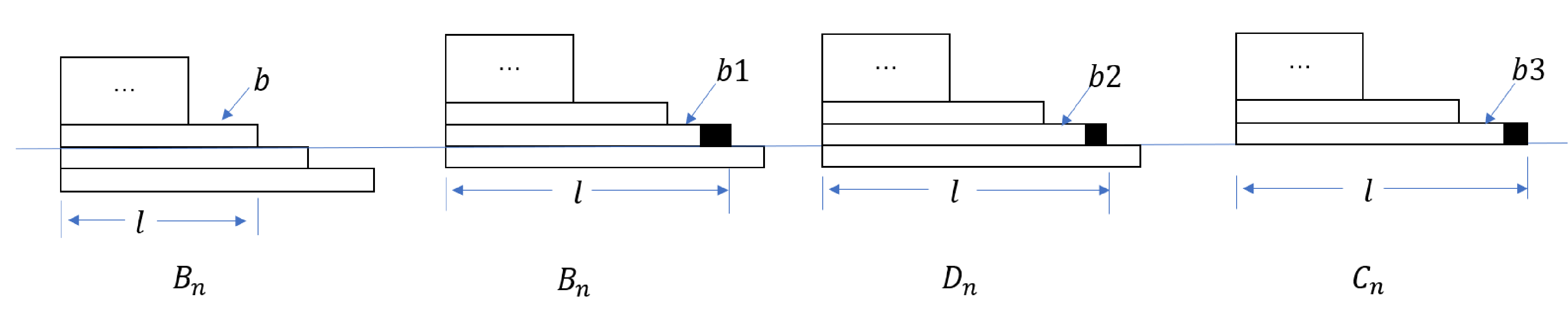}
  \end{center}
  \caption{ Black  boxs are omitted at the end of row. The length of  $b$  is $l$ and the lengths of $b1$, $b2$,  and  $b3$ are $l-1$.   The rows $b$, $b1$, $b2$, and $b3$ have  the same contributions to symbol.}
  \label{ababe}
\end{figure}

According to Table \ref{tsyn}, the top row of  an even  pairwise rows has the same contribution to symbol as that of  the bottom row  of an odd  pairwise rows with one less box. Examples are  shown in Fig.(\ref{ababe}).
The contribution to symbol of the row $b$ with length $2n$ in the $B_n$ theory is
\begin{eqnarray}
\Bigg(\!\!\!\begin{array}{c} 0 \;\;0\cdots 0 \;\;  \overbrace{{1\cdots 1}}^{n}  \\
\, 0\cdots 0\;\;  {0\cdots 0}\end{array} \!\!\!\Bigg)
 \end{eqnarray}
which is the same as that of the rows   $b1$,  $b2$, and $b3$ in the $B_n$, $C_n$, and $D_n$  theories, respectively.

\begin{table}
\begin{tabular}{|c|c|c|}\hline
 Location in a pairwise rows& Length $L$ of row& Contribution  \\ \hline
 top & $2n+1$ & $\Bigg(\!\!\!\ba{c}0 \;\; 0\cdots 0\;\; 0 \cdots 0 \\
\;\;\;0\cdots \underbrace{1 \;\;1\cdots 1}_{n} \ \ea \Bigg)$   \\ \hline
 bottom & $2n$  &  $\Bigg(\!\!\!\ba{c}0 \;\; 0\cdots 0\;\; 0 \cdots 0 \\
\;\;\;0\cdots \underbrace{1 \;\;1\cdots 1}_{n} \ \ea \Bigg)$    \\ \hline
\end{tabular}
\caption{Contribution to symbol of the top row of an odd  pairwise rows with length $2n+1$,  which is the same as the contribution to symbol of the bottom  row of an even  pairwise rows  with length $2n$.   }
\label{tsynu}
\end{table}

\begin{table}
\begin{tabular}{|c|c|c|}\hline
 Location  in a pairwise rows & Length $L$ of row & Contribution \\ \hline
 bottom  & $2n+1$ & $\Bigg(\!\!\!\ba{c}0 \;\; 0\cdots \overbrace{ 1\;\; 1\cdots1}^{n+1} \\
\;\;\;0\cdots 0\;\; 0\cdots 0 \ \ea \Bigg)$    \\ \hline
 top & $2n$ &  $\Bigg(\!\!\!\ba{c}0 \;\; 0\cdots \overbrace{ 1\;\; 1\cdots1}^{n} \\
\;\;\;0\cdots 0\;\; 0\cdots 0 \ \ea \Bigg)$       \\ \hline
\end{tabular}
\caption{ Contribution to symbol of the bottom row of  an odd  pairwise rows with length   $2n+1$,  which is the same as  the contribution to symbol of the top row of  an even  pairwise rows with length $2n$.}
\label{tsyno}
\end{table}
Summary,  we  have the following proposition.
\begin{proposition}\label{row-eo}
 The contribution to symbol of the bottom row of an odd  pairwise rows with length   $L$ is the   same as that of  the top row of an even  pairwise rows  with length $L+1$. And  the contribution to symbol of the top row of  an odd  pairwise rows with length $L$  is the   same as that of  the bottom  row of an  even  pairwise rows with length $L-1$.
\end{proposition}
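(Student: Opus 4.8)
The plan is to deduce the statement directly from the single-row contribution table, Table~\ref{tsyn}, by matching its entries. The one structural ingredient that is needed --- that a row occupying a prescribed position within its pair contributes the \emph{same} symbol in the $B_n$, $C_n$ and $D_n$ theories --- is already supplied by Proposition~\ref{pair-row} (equivalently Lemma~\ref{pair}); so the clause ``in the same theory or in different theories'' costs nothing, and I may treat the contribution of a row as a universal function of its position within the pair and its length. The concrete steps are: (i) read off from Table~\ref{tsyn} the four relevant contributions in terms of a single parameter $L$, the length of the row; (ii) note that two single-row contributions are equal exactly when they place their block of $1$'s in the same line (top or bottom) of the symbol and that block has the same length; (iii) perform the two required matchings arithmetically.

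For step (i): a row of odd length $L$ in the ``first'' position of a pair of odd-length rows deposits a block of $\tfrac{L+1}{2}$ ones at the right end of the \emph{top} line of the symbol, while in the ``second'' position of such a pair it deposits $\tfrac{L-1}{2}$ ones at the right end of the \emph{bottom} line; a row of even length $L$ in the ``first'' position of a pair of even-length rows deposits $\tfrac{L}{2}$ ones in the \emph{bottom} line, and in the ``second'' position $\tfrac{L}{2}$ ones in the \emph{top} line. For step (iii): the second assertion is then immediate, since a ``second''-position row of odd-length rows of length $L$ and a ``first''-position row of even-length rows of length $L-1$ both deposit a block of $\tfrac{L-1}{2}$ ones in the bottom line --- this is precisely Table~\ref{tsynu}, the ``delete one box'' move of Figure~\ref{ababe}. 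For the first assertion, a ``first''-position row of odd-length rows of length $L$ deposits $\tfrac{L+1}{2}$ ones in the top line, and matching the block length forces the companion row to be a ``second''-position row of a pair of even-length rows of length exactly $L+1$, which indeed deposits $\tfrac{L+1}{2}$ ones in the top line --- this is Table~\ref{tsyno}, the ``append one box'' move of Figure~\ref{abab}. One last appeal to Proposition~\ref{pair-row}, to allow the two rows to be drawn from partitions in possibly different theories, finishes the argument.

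I do not expect a real obstacle here: once Table~\ref{tsyn} is in hand the proposition is a line-by-line transcription of Tables~\ref{tsynu} and~\ref{tsyno}. The only points that require care are the bookkeeping of which physical member of a pair is labelled the ``first'' row and which the ``second'' --- so that the shift between $L$ and $L\pm1$ comes out with the correct sign and, in particular, so that in the first assertion the even pair is the longer one --- and the degenerate case $L=1$ of the second assertion, where the block of ones is empty and the companion ``pair of even-length rows of length $L-1=0$'' is to be read as the zero symbol. Should a derivation that does not pass through Table~\ref{tsyn} be wanted, the same four equalities can be checked directly from Table~\ref{tsy}, by tracking case by case how the two parity conditions controlling that table change as a box is appended or removed; this is the computation behind Figures~\ref{abab} and~\ref{ababe}.
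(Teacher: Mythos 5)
Your proof is correct and follows essentially the same route as the paper: the paper obtains Proposition~\ref{row-eo} precisely by reading the single-row contributions off Table~\ref{tsyn}, recording the two matchings in Tables~\ref{tsynu} and~\ref{tsyno} (the append/delete-one-box comparisons of Figs.~\ref{abab} and~\ref{ababe}), and relying on Proposition~\ref{pair-row} for the ``same theory or different theories'' clause. Your bookkeeping of ``first''/``second'' versus the table positions agrees with the paper's own usage in those table captions, so the block-of-ones arithmetic $(L\pm1)/2$ comes out exactly as in the paper.
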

\begin{flushleft}
This proposition is equivalent to the contents of  Tables {\ref{tsynu}} and  \ref{tsyno}.
 Compared with Table \ref{tsy},  the conclusions of Tables {\ref{tsynu}} and  \ref{tsyno} are not limited to  certain theory.  As shown in Fig.(\ref{ppa}), the rows $a$ and $b$ form a pairwise rows of the first  partition.  The second partition is obtained from the first one by omitting the rows under the row $a$, so it is a partition in the $C_n$ theory.  The third partition is obtained from the first one by omitting the rows under the row $b$, so it is a partition in the $B_n$ or $D_n$ theories, depending on the parity of the length of  row $a$. So  partitions  can be obtained from partitions in different theories.   This picture explain that the row with the same location in a pairwise rows would have the same contribution to symbol in different theories.
\end{flushleft}
\begin{figure}[!ht]
  \begin{center}
    \includegraphics[width=4in]{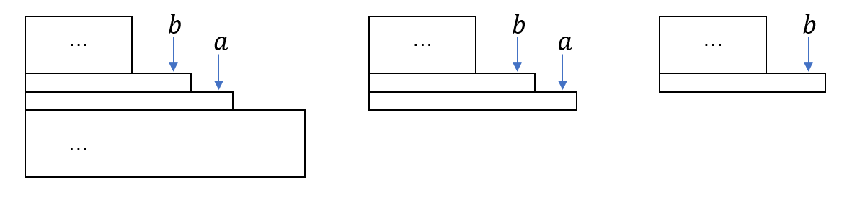}
  \end{center}
  \caption{The rows $a$ and $b$ form a pairwise rows of the first  partition. The other two partitions are obtained from the first one by omitting a part of  rows of the partition. The second partition is in the $C_n$ theory and the third partition would be in the $B_n$ or $D_n$ theories using Propositions \ref{Pb}, \ref{Pc}, and \ref{Pd}. }
  \label{ppa}
\end{figure}

\begin{flushleft}
According to Propositions \ref{pair-row} and \ref{row-eo}, \textit{the contribution to symbol of a  row is  an invariant}. In other words, given the  contribution to symbol of a row,  we can  list out all possible   lengths and  locations  of the row in a pairwise rows. Furthermore, given the symbol invariant, we can list all rigid semisimple surface operators corresponding to the invariant.
\end{flushleft}

\subsection{Maps preserving symbol }\label{spxy}

\begin{figure}[!ht]
  \begin{center}
    \includegraphics[width=5in]{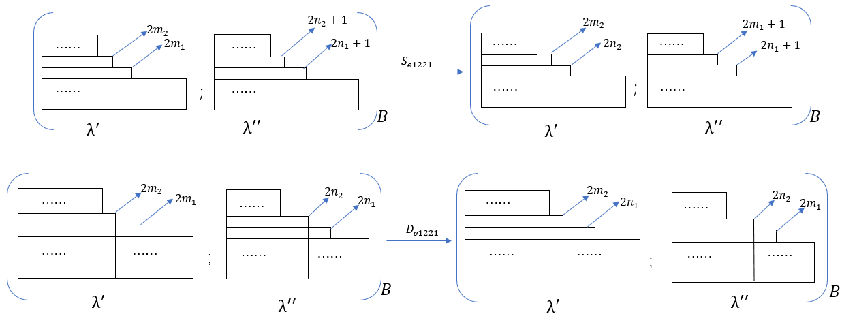}
  \end{center}
  \caption{The subscript $B$ indices a  $B_n$ surface operator with rigid $B_n$ partition    $\lambda^{'}$ and rigid  $D_n$ partition  $\lambda^{''}$.  The map $S_{e1221}$ swap  one row of  $\lambda^{'}$  with one row of  $\lambda^{''}$ in different locations in  pairwise rows.      Without confusion, the partitions of $B_n$ operator  under the maps are denoted as  $\lambda^{'}$ and $\lambda^{''}$ again. And the  map $D_{e1221}$ swap  one row of $\lambda^{'}$  with one row of  $\lambda^{''}$ in the same  location in  pairwise rows. }
  \label{sd}
\end{figure}
There are  two classes of symbol  preserving maps. The first class  of maps  takes  surface operators to surface operators in the same theory.
We have made a classification of the first class of maps in \cite{iv},  with examples  shown in  Fig.({\ref{sd}}). $(\lambda^{'}, \lambda^{''})$ is a rigid semisimple  operator  in the $B_n$ theory.
Under the   map  $S_{e1221}$,   the bottom row of a pairwise rows of $\lambda^{'}$ switches place with the top row of an pairwise row of $\lambda^{''}$ switch places,  which    preserves symbol  according to Proposition \ref{row-eo}.
Under  the map $D_{e1221}$,   the bottom row of a pairwise rows of $\lambda^{'}$  switches place with  the bottom row of an pairwise row of $\lambda^{''}$ switch places,  which preserves symbol according to Proposition \ref{pair-row}.

The second class of maps takes   surface operators to surfaces operator in  different theories,  for examples,  the  $S$ duality maps. Without confusion, the second class of maps will be called the  $S$ duality maps in the following sections.
For the construction  of the $S$ duality maps \cite{Wy09}, the maps $X_S$  and  $Y_S$ play  significant roles. $X_S$ map a partition with only odd rows in the $B_n$ theory to  a partition with only even rows in the $C_n$ theory
\begin{eqnarray} \label{XS}
X_S:&& m^{2n_m+1}\, (m-1)^{2n_{m-1}}\, (m-2)^{2n_{m-2}  } \cdots 2^{2n_2} \, 1^{2n_1}  \non \\   & \mapsto&
m^{2n_m}\, (m-1)^{2n_{m-1} +2 }\, (m-2)^{2n_{m-2} - 2 } \cdots 2^{n_2+2} \, 1^{2n_1-2}\,.
\end{eqnarray}
where $m$ has to be odd in order for the first object to be a partition in the $B_n$ theory.
As shown in  Fig.(\ref{xs}), on the left hand of the map $X_S$, the two rows in braces form   pairwise rows. On the right hand of the map $X_S$, the black boxes are omitted  and the gray boxes  are appended. And the two rows in braces belong to different pairwise rows. The bottom row on the left hand side become the top row on the right hand side while the top row on the left hand side become the bottom row on the right hand side.
\begin{figure}[!ht]
  \begin{center}
    \includegraphics[width=3.5in]{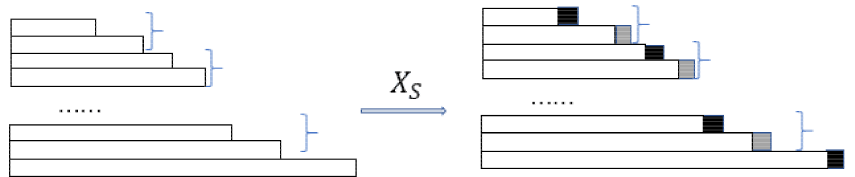}
  \end{center}
  \caption{ On the left hand of the map $X_S$, the two rows in braces form a  pairwise rows. On the right hand of the map $X_S$, the black boxes are omitted  and the gray boxes  are appended. And the two rows in braces belong to different pairwise rows.}
  \label{xs}
\end{figure}

Using Tables \ref{tsynu} and \ref{tsyno}, we can prove  the following lemma directly.
\begin{lemma}\label{sxs}
The map  $X_S$ preserve symbol invariant.
\end{lemma}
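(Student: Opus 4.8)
The plan is to prove Lemma \ref{sxs} by computing the symbol of each side of (\ref{XS}) row by row, using the contribution rules already established for pairwise rows, and then matching the two totals term by term. Write $\lambda$ for the $B_n$ partition on the left of (\ref{XS}) and $X_S(\lambda)$ for the $C_n$ partition on the right. By Proposition \ref{Pb} together with the discussion around Figure \ref{aaa}, the longest row of $\lambda$, of odd length $m$, is the top row of an odd pairwise block, while the remaining rows split into pairwise blocks; in the family (\ref{XS}) these are precisely the pairs of equal-length rows, the pair $(j,j)$ occurring with multiplicity $n_j$. Since $X_S(\lambda)$ has an even number of rows, Proposition \ref{Pc} gives a decomposition of $X_S(\lambda)$ into pairwise blocks only, again of the form $(j,j)$, whose multiplicities are those of $\lambda$ shifted by the $\pm 2$'s appearing in (\ref{XS}).

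First I would invoke Proposition \ref{pair-row} (equivalently Lemma \ref{pair}): the contribution to the symbol of a row, or of a whole pairwise block, depends only on the length of the row and on its position (top or bottom) inside its block, and not on whether we are in the $B_n$, $C_n$ or $D_n$ theory. Hence the pairwise blocks $(j,j)$ that occur with the same multiplicity on both sides contribute identically and cancel, and it remains only to compare the contribution of the isolated longest row of $\lambda$ against the contributions of the extra pairs that the $\pm 2$'s create or destroy.

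Then I would use Proposition \ref{row-eo} together with Tables \ref{tsynu} and \ref{tsyno}: the top row of an odd pairwise block of length $L$ has the same contribution as the bottom row of an even pairwise block of length $L+1$, and the bottom row of an odd pairwise block of length $L$ has the same contribution as the top row of an even pairwise block of length $L-1$. This is exactly the content of Figures \ref{abab} and \ref{ababe}, and it expresses that the elementary moves out of which $X_S$ is built — deleting the black boxes, which are the last boxes of the second rows of odd pairwise blocks, and appending the gray boxes, which are the last boxes of the first rows of the resulting even pairwise blocks (Figure \ref{xs}) — leave every individual contribution to the symbol unchanged. Carrying this out block by block, and checking that the lone top row of $\lambda$ is correctly absorbed into the $C_n$ pairwise structure via the first-row equivalences of Figures \ref{aaa} and \ref{ddd}, yields $\sigma(\lambda)=\sigma(X_S(\lambda))$.

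The step I expect to cost the most care is the bookkeeping of the re-pairing: one must verify that after the black/gray box moves the rows of $X_S(\lambda)$ genuinely regroup into valid pairwise blocks of a rigid $C_n$ partition, so that Table \ref{tsyn} applies, and that the number of blocks of each length and type on the two sides is matched exactly by the substitutions furnished by Proposition \ref{row-eo}. Once that matching is made explicit, the equality of symbols is immediate, since by Propositions \ref{pair-row} and \ref{row-eo} the contribution of a row is an invariant of its length and position alone.
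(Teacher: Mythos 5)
Your proposal is correct and takes essentially the same route as the paper: both reduce the lemma to the row-contribution results (Propositions \ref{pair-row} and \ref{row-eo}, i.e.\ Tables \ref{tsynu} and \ref{tsyno}) and compare the symbol of $\lambda$ and of $X_S\lambda$ row by row across the shifted pairwise structure. The ``bookkeeping of the re-pairing'' you defer is exactly the paper's one-line observation: on the $B_n$ side the pairwise rows are the $2k$th and $(2k{+}1)$th rows (the first row standing alone), on the $C_n$ side they are the $(2k{-}1)$th and $2k$th rows, so the $2k$th $B_n$ row matches the $(2k{-}1)$th $C_n$ row by Table \ref{tsynu} and the $(2k{+}1)$th $B_n$ row matches the $2k$th $C_n$ row by Table \ref{tsyno}.
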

\begin{proof}
 On the left hand side of  the map $X_S$, the $2k$th and $(2k+1)$th rows of the partition in the $B_n$ theory form  a pairwise rows excepting the first row. On the other side, the $(2k-1)$th and $2k$th rows  of the partition in the $C_n$ theory  form a pairwise rows. The first row can be regarded  as the top of a pairwise rows.

According to Table \ref{tsynu}, the contribution to symbol of the $2k$th row in the $B_n$ partition is equal to that of the $(2k-1)$th row in the $C_n$ partition.
According to Table \ref{tsyno}, the contribution to symbol of the $(2k+1)$th row in the $B_n$ partition is equal to that of the $2k$th row in the $C_n$ partition.
So the symbols on the two sides of the map $X_S$ are equal.
\end{proof}

\begin{figure}[!ht]
  \begin{center}
    \includegraphics[width=3.5in]{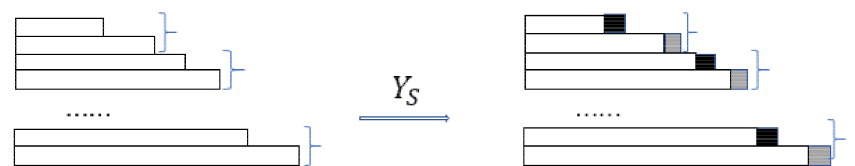}
  \end{center}
  \caption{ On the left hand of the map $Y_S$, the two rows in braces form  a pairwise rows. On the right hand of the map $Y_S$, the black boxes are omitted  and the gray boxes  are appended. And the two rows in braces belong to different pairwise rows.  }
  \label{ys}
\end{figure}
Next, we introduce  the  map $Y_S$ which take  a  rigid partition with only odd rows  in the $C_n$ theory to  a  rigid partition with only even rows  in the $D_n$ theory as shown in Fig.(\ref{ys}).
\begin{eqnarray} \label{YS}
Y_S:&& m^{2n_m+1}\, (m-1)^{2n_{m-1}}\, (m-2)^{2n_{m-2}  } \cdots 2^{2n_2} \, 1^{2n_1}  \non \\   & \mapsto&
m^{2n_m}\, (m-1)^{2n_{m-1} +2 }\, (m-2)^{2n_{m-2} - 2 } \cdots 2^{n_2-2} \, 1^{2n_1+2}\,
\end{eqnarray}
where $m$ has to be even in order for the first element to be a $C_k$ partition. The bottom row on the left hand side become the top row on the right hand side while the top row on the left hand side become the bottom row on the right hand side.
Similarly, we can prove the following lemma.
\begin{lemma}\label{sys}
The map  $Y_S$ preserve symbol invariant.
\end{lemma}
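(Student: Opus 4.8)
The plan is to run the same argument as in the proof of Lemma~\ref{sxs}, now between the $C_n$ and $D_n$ theories, leaning on the fact established in Propositions~\ref{pair-row} and~\ref{row-eo} (and Lemma~\ref{pair}) that the contribution to symbol of a row is an \emph{invariant}: it depends only on the parity of the row's length, on its length, and on its position (top or bottom) inside its pairwise rows, never on which of the $B_n$, $C_n$, $D_n$ theories the partition lives in. Thus it suffices to match rows on the two sides of $Y_S$ so that partners have the same such data (or data linked by the equivalences of Tables~\ref{tsynu} and~\ref{tsyno}), and then sum the contributions via \eqref{dddd}--\eqref{symboladd}.

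First I would pin down the pairwise-row structure on each side of $Y_S$. On the left, the hypothesis that $m$ is even makes $m^{2n_m+1}(m-1)^{2n_{m-1}}\cdots$ a legitimate rigid $C_k$ partition, and by Proposition~\ref{Pc} its rows group into pairwise rows $(2k{-}1,2k)$. On the right, by Proposition~\ref{Pd} the longest row of the resulting rigid $D_{n-k}$ partition stands alone and the remaining rows group into pairwise rows $(2k,2k{+}1)$; equivalently, as in the discussion around Figure~\ref{ddd}, the first $D_n$ row is treated as the top of its pairwise rows. Reading the correspondence off from \eqref{YS}: the $(2k{-}1)$th row of the $C_n$ partition (a ``top'' row) is carried to the $(2k)$th row of the $D_n$ partition (again a ``top'' row), and the $(2k)$th row of the $C_n$ partition (a ``bottom'' row) is carried to the $(2k{+}1)$th row of the $D_n$ partition (again a ``bottom'' row), while the appended and deleted zeros prescribed for the two theories in Definition~\ref{D2} absorb the boundary discrepancy (the extra top row on the $D_n$ side, and the parity shift of the outermost block caused by the $\pm 2$'s).

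With this correspondence in place the verification is a line-by-line table lookup: by Table~\ref{tsynu} the contribution to symbol of each ``top-of-pairwise-rows'' row of the $C_n$ partition equals that of its image in the $D_n$ partition, and by Table~\ref{tsyno} the same holds for each ``bottom-of-pairwise-rows'' row; summing these equal row contributions gives $\sigma(\lambda)=\sigma(Y_S(\lambda))$, which is the claim.

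The main obstacle, exactly as for $X_S$, is the bookkeeping at the two ends of the partition. One must check that the net effect of the alternating $\pm 2$'s in \eqref{YS}, combined with $m$ being even, leaves the pairwise grouping on the right genuinely consistent with Proposition~\ref{Pd}, and that the zero appended on the $C_n$ side and the zeros appended/deleted on the $D_n$ side (Definition~\ref{D2}) really do pair off under the row correspondence rather than leaving a stray $0$ in the symbol. Once these boundary terms are reconciled, every interior row is forced by Propositions~\ref{pair-row} and~\ref{row-eo}, and the argument closes in the same way as Lemma~\ref{sxs}.
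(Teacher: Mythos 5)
Your proposal is correct and follows essentially the same route as the paper: the paper proves Lemma~\ref{sys} simply by remarking that it goes ``similarly'' to Lemma~\ref{sxs}, i.e.\ by identifying the pairwise-row structure on the $C_n$ and $D_n$ sides of $Y_S$ (pairs $(2k{-}1,2k)$ versus a lone first row plus pairs $(2k,2k{+}1)$) and matching the shifted rows via Tables~\ref{tsynu} and~\ref{tsyno}, which is exactly your argument. The boundary bookkeeping you flag (the extra first row on the $D_n$ side and the appended/deleted zeros of Definition~\ref{D2}) is likewise left implicit in the paper's own treatment, so your write-up is at least as complete as the original.
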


Summary, under the map $X_S$,  we get a partition  $\lambda_{even}$ with only even rows  in the $C_n$ theory from a partition  $\rho_{odd}$ with only odd rows  in the $B_n$ theory,
$$X_S: \rho_{odd}\rightarrow \rho_{even}.$$
Under the map $Y_S$,  we get a partition  $\lambda_{even}$ with only even rows  in the $D_n$ theory from  a partition  $\rho_{odd}$ with only odd rows in the $C_n$ theory,
$$Y_S: \rho_{odd}\rightarrow \rho_{even}.$$
The common  characteristics of the maps  $X_S$ and $Y_S$  are to append  a box at the end of the bottom row of  a pairwise rows and to delete  a box at the end of the top row for a partition with  only odd rows.
Compared Fig.(\ref{xs}) with Fig.(\ref{ys}), the relationship between the map $X_S$ and the map $Y_S$ is
\begin{equation}\label{xyxy}
  X_S(m\rightarrow m-1)=Y_S.
\end{equation}
Thus the map $Y_S$ can be regarded as a  special case of the map $X_S$.\footnote{ The unipotent conjugacy classes (nilpotent orbits) are related to the partitions by Kazhdan-Lusztig map. It would be interesting to study the inspiration of the relationship (\ref{xyxy}) on the nilpotent orbits.} In fact, the Fig.(\ref{ppa}) explain this result.

\subsection{$S$-duality maps for rigid surface operators}\label{sduality}
Combined the   addition rules \ref{ddddr},  the maps $X_S$ and $Y_S$  can be used to construct   the $S$ duality maps of   surface operators.   The $S$ duality maps have the following form
\begin{equation}\label{S}
 S:\,\,\,(\lambda, \rho)_G\rightarrow  (\lambda^{'},\rho^{''})_{G^L}.
\end{equation}
which preserve symbol.
In \cite{Wy09}, Wyllard made  explicit proposals for how the $S$-duality map should act on unipotent surface operators and certain subclasses of semisimple surface operators,  which passe all consistency checks. In \cite{ShO06}, we  made   new proposals for  certain subclasses of semisimple surface operators.

These $S$-duality maps can be explained  naturally as the symbol preserving maps  using Propositions \ref{pair-row} and \ref{row-eo},
drawing  the conclusion directly and avoid complicated derivation.

\begin{flushleft}
\textbf{For rigid unipotent operators $(\lambda, \emptyset)$ in the $B_n$ theory}

The $S$-duality map is
\end{flushleft}
\begin{equation}\label{WB}
 WB:\,\,\,(\lambda, \emptyset)_B\rightarrow (\lambda_{odd}+\lambda_{even},\emptyset)\rightarrow (X_S\lambda_{odd},\lambda_{even})_C.
\end{equation}
Start by splitting  the Young tableau $\lambda$ into  tableau $\lambda_{even}$ constructed from   even rows only  and tableau $\lambda_{odd}$  constructed from the  odd rows only.  Next  the map $X_S$ turns $\lambda_{odd}$ to a partition with only even rows while   $\lambda_{even}$ is left  unchanged. Finally, the duality operator  corresponding to $(\lambda, \emptyset)$ in the  $C_n$ theory is $(X_S\lambda_{odd}, \lambda_{even})$.  According to Proposition \ref{pair-row} and  Lemma \ref{sxs},  the map  $WB$  preserve the symbol.
An example illustrates the procedure.
\begin{flushleft}
  \textbf{Example}: For the $B_{16}$ partition,  $\la = 5\, 4^2 \, 3^3\, 2^4\, 1^3$, applying the map $WB$,  we find
\end{flushleft}
\begin{equation} \label{rank16}
WB:\,\,\, \tableau{1 3 6 10 13} \quad \rightarrow \left(\quad\tableau{4 12}\quad  ;\quad \tableau{6 10}\right)
\end{equation}
which leads to the semisimple $C_{16}$ surface operator $(2^4\,1^8\,,\,2^6\,1^4)$.

%

\begin{flushleft}
\textbf{For rigid unipotent operators $(\lambda, \emptyset)$ in the $C_n$ theory}
Similarly,  the following  $S$-duality map preserve symbol,
\end{flushleft}
\begin{equation}\label{WC}
 WC:\,\,\, (\lambda, \emptyset)_C\rightarrow (\lambda_{odd}+\lambda_{even},\emptyset)\rightarrow (X_S^{-1}\lambda_{even},Y_S\lambda_{odd})_B.
\end{equation}

\begin{flushleft}
\textbf{For semisimple  surface operators $(\rho\,;\rho)$ in the  $C_n$  theory }

The $S$-duality map is
\end{flushleft}

\begin{equation}\label{WCC}
WCC:\,\,\,(\rho\,;\rho)_C\rightarrow(\rho_{\mathrm{even}} +\rho_{\mathrm{odd}} \,;\rho_{\mathrm{odd}} + \rho_{\mathrm{even}}) \rightarrow (\rho_{\mathrm{even}} + X_S^{-1}\rho_{\mathrm{even}}\,;\rho_{\mathrm{odd}} + Y_S\rho_{\mathrm{odd}})_B.
\end{equation}
Firstly,  split two equal tableaux into even-row tableaux $\rho_{\mathrm{even}}$ and odd-row tableaux $\rho_{\mathrm{odd}}$. Then  apply the map $X_S$  to one of the odd-row tableaux and apply the map $Y_S^{-1}$  to the even-row tableau in the other semisimple factor. Next add the altered and unaltered even-row tableaux to form one of the two partitions in a semisimple $B_n$ operator. Finally,  do the same  to the odd-row tableaux and lead to a semisimple operator in the $B_n$ theory.

 $(\rho_{\mathrm{even}} + X_S^{-1}\rho_{\mathrm{even}}\,;\rho_{\mathrm{odd}} + Y_S\rho_{\mathrm{odd}})_B$  is a rigid  surface operator.  An illustration is made through  an example as shown in Fig.(\ref{xy}). A pairwise rows of $\rho_{\mathrm{even}}$ are placed  between the bottom and the top row of a a pairwise rows of   $X_S^{-1}\rho_{\mathrm{even}}$,  not violating the rigid conditions.  A pairwise rows of $\rho_{\mathrm{odd}}$ are placed  between the bottom and the top row of  a pairwise rows of   $Y_S\rho_{\mathrm{odd}}$, not violating the rigid conditions. According to Proposition \ref{pair-row}, the partitions  $\rho_{\mathrm{even}}$ and  $\rho_{\mathrm{odd}}$ have the same contributions to symbol on the two sides of the map $WCC$. According to Proposition \ref{row-eo}, the partitions  $X_S^{-1}\rho_{\mathrm{even}}$ and  $Y_S\rho_{\mathrm{odd}}$ have the same contributions to symbol on the two sides of the map.

\begin{figure}[!ht]
  \begin{center}
    \includegraphics[width=5in]{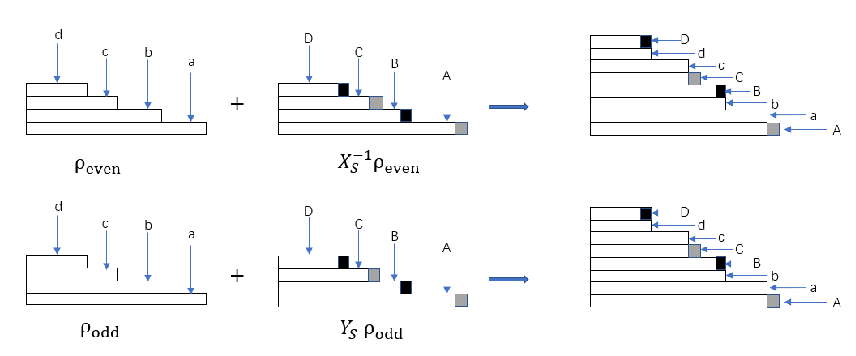}
  \end{center}
  \caption{$\rho_{\mathrm{even}} + X_S^{-1}\rho_{\mathrm{even}}\,$ is a partition in the $B_n$ theory and $\rho_{\mathrm{odd}} + Y_S\rho_{\mathrm{odd}}$ is a partition in the $D_n$ theory.  Thus $(\rho_{\mathrm{even}} + X_S^{-1}\rho_{\mathrm{even}}\,;\rho_{\mathrm{odd}} + Y_S\rho_{\mathrm{odd}})_B$  is a rigid  surface operator in the $B_n$ theory. }
  \label{xy}
\end{figure}

\begin{flushleft}
\textbf{For semisimple  surface operators $(\lambda_{even}\,;\rho_{odd})$ in the  $C_n$  theory }

In \cite{ShO06}, we propose a   $S$-duality map in the sense of symbol invariant as follows,
\end{flushleft}
\begin{equation}\label{cbeo}
 CB_{eo}:\,\,\,(\lambda_{\mathrm{even}}\,;\rho_{\mathrm{odd}})_C\rightarrow(X_S^{-1}\lambda_{\mathrm{even}}\,;Y_S\rho_{\mathrm{odd}})_B,
\end{equation}
which   preserves symbol according to Proposition \ref{row-eo}. One example of this duality is shown in the eighteenth example in the appendix.

\subsection{Discussions}
The $S$ duality maps  preserve symbol invariant and other invariants of partitions. Compared  to other invariants,  the symbol is more easier to be calculated and more convenient to find the $S$-duality maps. Through not all symbol preserving maps are $S$ duality maps,  a more thorough understanding them might lead to progress.  Propositions \ref{pair-row} and \ref{row-eo} make the  the contribution to symbol of rows visualization.   Proposition \ref{pair-row} implies the  symbol preserving operations that   moving a row of a partition to another partition with the same location in  a pairwise rows. One example is that   leaving   $\lambda_{even}$ unchanged in the $S$ duality map $WB$.  Proposition \ref{row-eo} implies the symbol preserving operations such as the maps $X_S$,  $Y_S$ and their inverse maps.  We also find the  important maps $X_S$ and $Y_S$ are essentially the same map.

In fact, the contribution to symbol of a row in a partition  is  also an invariant.  It do not change under the first class of maps and second one.                  Fig.(\ref{ppa}) explain this result.

With these principles in mind, we will discuss the constructions of the rigid  operators in the $B_n$ theory from the $C_n$ theory and vice versa in next section, where the operations  in Propositions \ref{pair-row} and \ref{row-eo} will be used frequently as well as the maps $X_S$,  $Y_S$.

\section{Mismatch of the \rso \, between dual theories}\label{mismatch}

There is a discrepancy of the number of rigid surface operators between the $B_n$ and $C_n$  theories \cite{Wy09}, which  was first observed in the $B_4/C_4$ theories in \cite{GW08}.
Using the generating function for the total number of rigid surface operators(both unipotent and semisimple),  Wyllard found that the difference of number of operators    between  the $B_n$ and  $C_n$ theory is
\begin{equation}\label{num}
  q^9+ 2q^{11}+ 4q^{13}+ 5q^{15}+ 9q^{17}+ 12q^{19}+ 17q^{21}+ 23q^{23}+ \cdots
\end{equation}
where the degree corresponds to the rank $n$ of Lie algebra.

The discrepancy issue is clearly a major problem. Wyllard  gave examples  and made  a preliminary analysis of the problematic surface operators in \cite{Wy09}.  As shown  in the appendix, it seems that
there are two types  of mismatches of  rigid surface operators between the $B_n$  theory and $C_n$  theory.
The first one is  that    certain surface operators  in $B_n/C_n$ theory do not have duals.  And the second one is that the number of surface operators with certain invariants in $B_n$ theory is more than that  in the $C_n$  theory.

In this section,   we analyse the mismatch problem based on constructions of symbol presented  in previous sections.  We find that the discrepancy issue  originates from the rigid conditions of rigid  partitions.

\subsection{Changes of the first row of a partition under $S$ duality}\label{f1}
According to Tables \ref{tsynu} and \ref{tsyno},  the contribution to the symbol of each row of a partition  will not change under the symbol preserving map, which means the contribution to symbol of a row is  an invariant.  So the longest row of the two factors of a rigid surface operator will still  be the longest row on the other side of the $S$-duality map.  According to Propositions \ref{Pb}, \ref{Pc}, and \ref{Pd}, the first two rows of the $C_n$ partitions form   a pairwise rows, while the first row of partitions  in the $B_n$ and $D_n$ theories  not belongs to  a pairwise rows. With these  facts in mind, there are two choices    for the  movements of  the longest row in the second class of   the symbol preserving  maps ($S$-duality maps).
\begin{enumerate}
  \item For the first choice, the longest row  moves from one factor of the \rso \, to the other factor, which will be studied in Sections \ref{di}, \ref{cb}.
  \item For the second one, the longest row stays in the same factor, which will be studied in Sections \ref{m2}.
\end{enumerate}
These two choices correspond to two strategies to construct the $S$-duality maps.

The first class of the symbol maps which is the maps between the \rso \,\,  have been  classified  in \cite{Shou-sc}.  They are   one to one correspondence   on the two side of the $S$-duality map, which will be  illustrated  in Section \ref{otooto}.

\subsection{Generating $B_n$ rigid  semisimple surface operators  from the $C_n$ theory}\label{di}
In this subsection, we propose  algorithms  to generate $B_n$  rigid semisimple surface operators   from that of   the $C_n$ theory.
The two factors of $C_n$  rigid semisimple surface operators are partitions in the $C_n$ theory.  The first two rows of a rigid $C_n$ partition  form a pairwise rows according to Proposition \ref{Pc}.  And thus the parities  of the length of  the first two rows of the  factors of the $C_n$ rigid semisimple surface operator have the same parity or different.

Firstly,  consider the case that  the first two rows of both factors of  the $C_n$ rigid semisimple surface operator  have the same  parities. And there are two cases according to the parity of the length of the first row.
\begin{itemize}
  \item The first two rows of both  factors of a rigid surface operator  are even. The algorithm $EE$ is defined  in Fig.(\ref{EE}). Without lose of generality, we assume the first row of the partition $C2$ is the longest row of the  partitions $C1$ and $C2$. Take  the longest row  from one factor to another one and  append a gray  box at the end of it.  The partition $C1$ become the partition $B1$ and the partition $C2$ become the partition $D2$.
\begin{figure}[!ht]
  \begin{center}
    \includegraphics[width=5in]{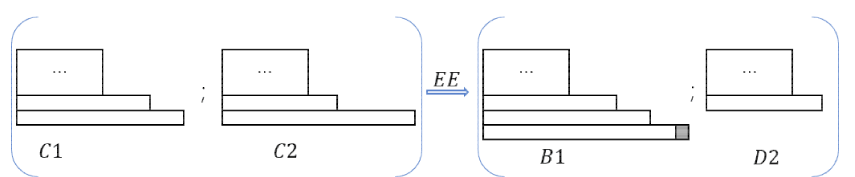}
  \end{center}
  \caption{ Algorithm $EE$ turn a $C_n$  rigid surface operator into a $B_n$  one. The partitions $C1$ and  $C2$ are in the  $C_n$ theory,  with  first two rows   even. And the partitions $B1$ and  $D2$ are in the  $B_n$ and $D_n$ theories, respectively.  }
  \label{EE}
\end{figure}

  \item   The first two rows of both  factors of a rigid surface operator  are odd.  The algorithm $OO$ is defined in Fig.(\ref{OO}). Without lose of generality, we assume the first row of the partition $C2$ is the longest row of the  partitions $C1$ and $C2$. Take  the longest row  from one factor to another one and  append a gray  box at the end of it.  The partition $C1$ become the partition $D2$ and the partition $C2$ become the partition $B1$.
\begin{figure}[!ht]
  \begin{center}
    \includegraphics[width=5in]{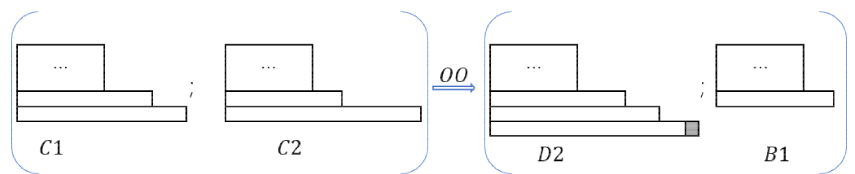}
  \end{center}
  \caption{  Algorithm $OO$ turn a $C_n$  rigid surface operator into a $B_n$  one. The partitions $C1$ and  $C2$ are in the  $C_n$ theory, with the first two rows  odd.. And the partitions $B1$ and  $D2$ are in the  $B_n$ and $D_n$ theories, respectively.  }
  \label{OO}
\end{figure}
\end{itemize}

According to Tables  \ref{tsynu} and \ref{tsyno}, we have the following proposition.
\begin{proposition}\label{eeooee}
The algorithms $EE$ and $OO$ preserve symbol.
\end{proposition}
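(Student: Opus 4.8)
The plan is to reduce the statement to two facts from the previous section: that the symbol of a rigid semisimple operator is additive over the rows of its two partitions, and that the contribution of a single row is an invariant in the sense of Proposition~\ref{pair-row}. By Definition~\ref{D2}, Table~\ref{tsyn} and the addition rule~(\ref{dddd}), one has $\sigma\bigl((\lambda';\lambda'')\bigr)=\sigma(\lambda')+\sigma(\lambda'')$, and each $\sigma(\mu)$ is the sum over the rows of $\mu$ of a single-row contribution $c(r)$; by Proposition~\ref{pair-row} the value $c(r)$ depends only on the length of $r$, its parity, and whether $r$ is the top or the bottom row among its pairwise rows, and not on the ambient theory ($B_n$, $C_n$ or $D_n$) nor on which pairwise pair $r$ lies in. Hence it suffices to exhibit a length-respecting bijection between the rows of the source $C_n$ operator and the rows of the output $B_n$ operator under which matched rows carry equal $c(\cdot)$.

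First I would make the effect of the algorithms explicit. Write the $C_n$ operator as $(C_1;C_2)$, the rows of each factor grouped from the top into pairwise rows $(a_1,a_2),(a_3,a_4),\dots$ and $(b_1,b_2),(b_3,b_4),\dots$, and let $\ell$ be the longest row of the whole operator. Both $EE$ and $OO$ (Figures~\ref{EE} and~\ref{OO}) remove $\ell$ from its factor, insert it into the other, append one box, and reinterpret the two partitions as a rigid $B_k$ and a rigid $D_{n-k}$ partition. The parity hypotheses ($EE$: the first two rows of both factors even; $OO$: odd), together with Propositions~\ref{Pb}, \ref{Pc} and~\ref{Pd}, force the outcome: the factor that receives $\ell$ keeps $\ell$ as its longest row and, since the longest row of a rigid $D_n$ (resp.\ $B_n$) partition must be even (resp.\ odd), becomes the $D_n$ factor in case $EE$ and the $B_n$ factor in case $OO$; the other factor is now headed by the former bottom row $\ell'$ of its first pair, whose parity is now wrong, so the appended box lands on that row, turning $\ell'$ into $\ell'+1$, and that factor becomes the $B_n$ factor in case $EE$ and the $D_n$ factor in case $OO$.

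The key structural observation is that in a $C$ partition the rows pair up from the top, while in a $B$ or $D$ partition the longest row is unpaired and the remaining rows pair up from the second row on. Consequently every row other than $\ell$ and $\ell'$ keeps its top/bottom position among its pairwise rows: the old pairs $(a_3,a_4),(a_5,a_6),\dots$ of the factor that lost $\ell$ reappear as the pairs beneath $\ell'+1$, and the pairs of the other factor reappear as the pairs beneath $\ell$. By Proposition~\ref{pair-row} the contributions of all these rows are unchanged, and the proposition reduces to the single identity
\[
c(\ell,\text{top of a pair})+c(\ell',\text{bottom of a pair})\;=\;c(\ell,\text{longest row})+c(\ell'+1,\text{longest row}).
\]
Now the longest row of a rigid $B_n$ (resp.\ $D_n$) partition has the same contribution as the top row of an odd (resp.\ even) pair of the same length (the content of Figures~\ref{aaa} and~\ref{ddd}); since the parity of $\ell$ is unchanged, the first summands on the two sides coincide and cancel. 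What remains is the equality of $c(\ell',\text{bottom of a pair of length }\ell')$ with $c(\ell'+1,\text{top of a pair of length }\ell'+1)$ — one of these pairs being even and the other odd — which is precisely Proposition~\ref{row-eo} (equivalently Tables~\ref{tsynu} and~\ref{tsyno}, or Figures~\ref{abab} and~\ref{ababe}), read with the parity of $\ell'$ appropriate to the case. This shows $\sigma$ is preserved by both $EE$ and $OO$.

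The step that will require the most care is not this closing identity but the middle bookkeeping: one must determine, from the parity of the first pairwise pair, which factor acquires the appended box and into which theory each factor passes, and then check that the re-pairing of the surviving rows really does preserve every row's top/bottom status — including degenerate configurations, such as a factor consisting of a single pairwise pair, or a $C$ partition of odd length whose shortest row is already unpaired, where one verifies directly from Definition~\ref{D2} that the lone shortest row is untouched. Once these are handled, the argument collapses to one application each of Propositions~\ref{pair-row} and~\ref{row-eo}.
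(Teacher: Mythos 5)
Your proof is correct and takes essentially the same route as the paper: the paper's entire justification for this proposition is the appeal to Tables~\ref{tsynu} and~\ref{tsyno} (i.e.\ to Propositions~\ref{pair-row} and~\ref{row-eo} together with the identification of the first row of a $B_n$/$D_n$ partition with the top row of an odd/even pair in Figs.~\ref{aaa} and~\ref{ddd}), and your argument is exactly that appeal spelled out row by row, with the additional bookkeeping of which factor becomes the $B_n$ versus the $D_n$ partition and the re-pairing of the untouched rows. The one caveat is that the exact placement of the appended gray box is specified only by Figs.~\ref{EE} and~\ref{OO} rather than by the surrounding text, but the reading you adopt (the box lands on the new first row of the factor that lost the longest row) is the one compatible with the parity constraints of Propositions~\ref{Pb}--\ref{Pd} and is precisely what makes the per-row accounting via Tables~\ref{tsynu} and~\ref{tsyno} close.
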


These algorithms  also preserve the rigid conditions.
\begin{proposition}\label{eeoo}
The algorithms $EE$ and $OO$ preserve rigid conditions of partitions.
\end{proposition}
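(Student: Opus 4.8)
The goal is to show that when the algorithm $EE$ (resp.\ $OO$) is applied to a rigid semisimple $C_n$ surface operator $(C1,C2)$ of the stipulated form, the output pair $(B1,D2)$ is again a legitimate rigid surface operator, i.e.\ $B1$ is a rigid $B_k$ partition and $D2$ a rigid $D_{n-k}$ partition for the appropriate $k$. Concretely this means checking, for each of $B1$ and $D2$: that the total number of boxes has the correct parity ($2k+1$ for $B1$, $2(n-k)$ for $D2$); that every even part occurs an even number of times; and that the two rigid conditions hold, namely no gaps ($\lambda_i-\lambda_{i+1}\le 1$) and no odd part of multiplicity exactly two. The plan is to verify these in that order, using Propositions \ref{Pb}, \ref{Pc}, \ref{Pd} to control the row structure of the partitions involved.

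The size bookkeeping is immediate: moving the longest row (of length $\ell$) from one factor to the other leaves $|C1|+|C2|=2n$ unchanged, and appending the single box brings the total to $2n+1$; since a $C$-type partition has even size, and since by Proposition \ref{Pc} $\ell$ is even in the $EE$ case and odd in the $OO$ case, one sees that in each case exactly one of the two output partitions acquires odd size and the other even size, which fixes the $B_k$/$D_{n-k}$ assignment. For the even-multiplicity condition the hypothesis of the algorithm does real work: a rigid $C$ partition constrains the multiplicities of its \emph{odd} parts (each appears an even number of times), whereas a rigid $B$ or $D$ partition constrains the \emph{even} parts. I would show that deleting the longest row --- which by Proposition \ref{Pc} occurs with a multiplicity of the right parity, given that the first two rows of its factor are even (resp.\ odd) --- together with the appended box, is precisely what converts the odd-part condition into the even-part condition at the part-values that are touched, so that $B1$ and $D2$ inherit the $B/D$-type multiplicity condition.

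For the two rigid conditions the argument is local: the only adjacencies that are newly created are at the top of the factor that receives the moved row, and at the bottom where the box is appended (see Figures \ref{EE} and \ref{OO} for the exact placement). At the receiving junction, $\ell$ is a longest row of the whole operator and the factors have no gaps, so the length of the moved row and that of the old first row of the receiving factor differ by at most one; Propositions \ref{Pb} and \ref{Pd} then identify the resulting top two rows of the receiving partition as exactly the pattern with which a rigid $B$ (resp.\ $D$) partition must begin, so neither a gap nor a forbidden doubled odd part is introduced there. At the box junction, the assumption that both factors have first two rows of the stipulated parity, combined with the no-gap and no-doubled-part rigidity of the input $(C1,C2)$, pins down the value and multiplicity of the shortest run of parts of the affected partition tightly enough that the appended box neither opens a gap nor makes some odd part occur exactly twice.

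I expect the box junction to be the main obstacle: one must rule out, once and for all, every configuration in which appending the box would create a gap (for instance if the affected partition's shortest part were too large) or produce an odd part of multiplicity exactly two --- and this is exactly why the algorithms are formulated only for operators whose two factors share the specified first-row parity rather than for arbitrary rigid $C_n$ operators. The cleanest way to finish is a short finite case analysis on the shortest run of equal parts in the relevant factor, using Propositions \ref{Pb}--\ref{Pd} to eliminate the offending cases; Figures \ref{EE} and \ref{OO} serve as the templates for this bookkeeping, and the symbol-preservation already established in Proposition \ref{eeooee} provides a useful consistency check on the output.
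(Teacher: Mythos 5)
Your plan correctly identifies what has to be checked (sizes, the even-multiplicity condition, no gaps, no odd part of multiplicity exactly two), but it does not actually carry out the decisive steps: the multiplicity verifications are deferred to ``a short finite case analysis'' that is never performed, and the mechanism you propose for them is not the one that makes the proposition true. In this paper a ``row'' of a partition is a row of the displayed tableau, whose length is the number of parts that are $\ge i$ (see Table \ref{tsy}, where the $i$th row has length $\sum_{k\ge i}n_k$); equivalently, the rows are the parts of the transpose. Hence the $EE$/$OO$ move does \emph{not} transplant a single part and touch only ``the part-values that are touched'': it deletes the first column of the donor partition and creates a new first column of height $\ell+1$ in the receiver, so \emph{every} part of the donor drops by $1$ and \emph{every} part of the receiver rises by $1$. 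This global parity flip is the whole content of the paper's argument: even parts of $C1,C2$ become odd parts of $B1,D2$ with unchanged multiplicities, so the $C$-type conditions (odd parts of even multiplicity, no even part exactly twice) convert wholesale into the $B/D$-type conditions for every part $\ge 2$; the only thing left is the new part $1$ of the receiving factor, whose multiplicity equals $(\ell+1)$ minus the receiver's old first row, and the parity hypothesis of $EE$/$OO$ makes this difference odd, hence $\ne 2$. Under the reading your sketch suggests (moving one part and appending a box), the statement you are trying to prove would in fact fail: transplanting a longest part $2$ of $(2^3 1^6;2^3 1^6)$ and appending a box gives $3\,2^3\,1^6$, in which the even part $2$ has multiplicity $3$, so it is not a $B$-type partition at all.

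Two further concrete problems. First, your junction analysis rests on the claim that, because the factors have no gaps, the moved row and the old first row of the receiving factor differ in length by at most one; no-gap is a condition \emph{within} a single factor and says nothing across factors, and the claim is false --- e.g.\ the rigid $C$ pair $(2\,1^2;\,2\,1^4)$ has rows $(3,1)$ and $(5,1)$, both factors starting with two odd rows, yet the longest row $5$ exceeds the other factor's first row $3$ by two. (Nothing that strong is needed: for the no-gap condition one only needs that $\ell+1$ is strictly longer than every row of the receiver, which is automatic.) Second, the parity bookkeeping for the appended box at the ``bottom'' is exactly the place where the multiplicity of the part $1$ must be controlled, and your sketch acknowledges the danger but does not resolve it; the resolution is the parity argument above, not a case analysis on the shortest run of parts.
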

\begin{proof} We prove the proposition for the algorithm $EE$.
As shown in  Fig.(\ref{EE}), there are  no gaps appearing in the $B_n$ rigid semisimple surface operator $(B1,D2)$.  And the even integers in the partitions $C1, C2$ become the odd integers in the partitions $B1, D2$.  Since no even integer appears exactly twice in the symplectic $(C_n)$ partitions $C1, C2$,  no odd integer  appears exactly twice in the orthogonal $D_n$ partitions $D2$ and no odd integer $(\geq 3)$ appears exactly twice in the orthogonal
$ B_n$ partitions $ B1$. Since the difference of lengths between the longest row appended a gray box and the second row of the partition $B1$ is odd, the part '1' would not appear twice in the  partition $B1$.

Similarly, we can prove the algorithms  $OO$ preserve the rigid conditions of partitions.
\end{proof}

\begin{figure}[!ht]
  \begin{center}
    \includegraphics[width=5in]{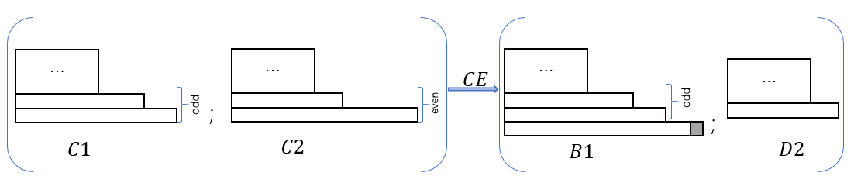}
  \end{center}
  \caption{Algorithms $CE$ turn a $C_n$  rigid surface operator into a $B_n$  one. The first row of  $C2$ is the longest of the two partitions on the left hand side of $CE$.  Add  it to $C1$ and   append a gray box as the last part of the longest row.  }
  \label{CE}
\end{figure}

\begin{figure}[!ht]
  \begin{center}
    \includegraphics[width=5in]{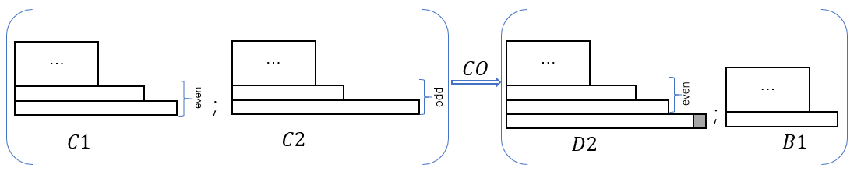}
  \end{center}
  \caption{Algorithms $CO$ turn a $C_n$  rigid surface operator into a $B_n$  one. The first row of  $C2$ is the longest of the two partitions on the left hand side of $CO$.  Add  it  to $C1$ and   append a gray box as the last part of the longest row. }
  \label{CO}
\end{figure}

Secondly,  consider the case that  the first two rows of  factors of  the $C_n$ rigid semisimple surface operator  are of different parities. According to the parity of the length of the longest row,  there are two cases.
\begin{itemize}
  \item The length of the  longest row of two factors  is even. If the first row of  $C2$ is the longest and the length even, we propose an algorithm $CE$ to get a $B_n$ rigid semisimple surface operator from the $C_n$ one as shown in Fig.(\ref{CE}). We add  the longest row to $C1$ and append a gray box, leading to a $B_n$ partition $B1$ and a $D_n$ partition $D2$. The $D_n$ partition $D2$ satisfy the rigid conditions as  Proposition \ref{eeoo}.
  \item The length of the longest row of two factors  is odd.  If the first row of  $C2$ is the longest and the length is odd, we propose an algorithm $CO$  as shown in Fig.(\ref{CO}). We  add  the longest row  to $C1$ and   append a gray box, leading to a $D_n$ partition $D2$ and a $B_n$ partition $B1$.  The $B_n$ partition $B1$ satisfy the rigid  conditions as  Proposition \ref{eeoo}.
\end{itemize}

It is easy  to prove the following proposition  according  to   Tables  \ref{tsynu} and \ref{tsyno}.
\begin{proposition}\label{ceco}
The algorithms $CE$ and $CO$ preserve symbol.
\end{proposition}

However,  under the algorithms $CE$ and $CO$, the partitions $B1$ and $D2$ do not always  preserve the rigid condition.
\begin{flushleft}
 \textbf{ $IC$ type problematic surface operators:} $L(C1)$ and $L(C2)$ denote the lengths of the partitions of $C1$ and $C2$, respectively.
\end{flushleft}
 \begin{itemize}
 \item If $L(C1)=L(C2)-1$, the part '1' appear twice in the  $B_n$  partition $B1$  under the algorithm $CE$, violating   rigid condition (2) in Section \ref{pre}.
 \item If $L(C1)=L(C2)-1$,  the part '1' appear twice in the $D_n$  partition $D2$  under the algorithm $CO$,  violating   rigid condition (2) in Section \ref{pre}.
  \end{itemize}

For these problematic operators, we may try to  add the shorter  row of the first rows of the factors of the $C_n$ rigid semisimple surface operator  from one factor  to the other one. However, these procedures   do not lead to rigid surface operators, violating the rigid condition $\lambda_{i}-\lambda_{i+1}\leq1$   as shown in Figs.(\ref{cos}) and (\ref{ces}).
\begin{figure}[!ht]
  \begin{center}
    \includegraphics[width=5in]{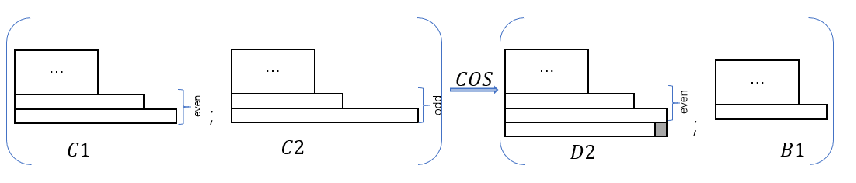}
  \end{center}
  \caption{ $C_n$ partitions $C1$, $C2$.  $D_n$ partition $D2$ and $B_n$ partition $B1$. Algorithm $COS$ add the first row of $C1$   to $C2$ and append a gray box under the condition $L(C1)=L(C2)-1$. }
  \label{cos}
\end{figure}

\begin{figure}[!ht]
  \begin{center}
    \includegraphics[width=5in]{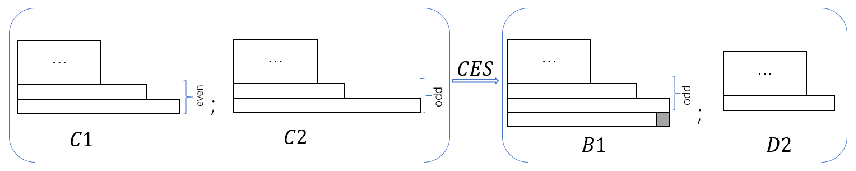}
  \end{center}
  \caption{$C_n$ partitions $C1$, $C2$.  $D_n$ partition $D2$ and $B_n$ partition $B1$. Algorithm $CES$ add the first row of $C1$   to $C2$ and append a gray box under the condition $L(C1)=L(C2)-1$. }
  \label{ces}
\end{figure}
 \begin{itemize}
 \item If $L(C1)=L(C2)-1$,  and then $\lambda_{l-1}-\lambda_{l}=2$ in the $D_n$ partition $D2$  under the algorithm $COS$, violating   rigid condition (1) in Section \ref{pre}.
 \item If $L(C1)=L(C2)-1$,  and then $\lambda_{l-1}-\lambda_{l}=2$ in the $B_n$ partition $B1$ under the algorithm $CES$, violating   rigid condition (1) in Section \ref{pre}.
  \end{itemize}

To  dispel the  obstruction of  the algorithm $CE$,  we may try to  map the $C_n$  operator to another $C_n$  operator with the same symbol  as shown in Fig.(\ref{cbe}) before taking the algorithm $CE$.
\begin{figure}[!ht]
  \begin{center}
    \includegraphics[width=5in]{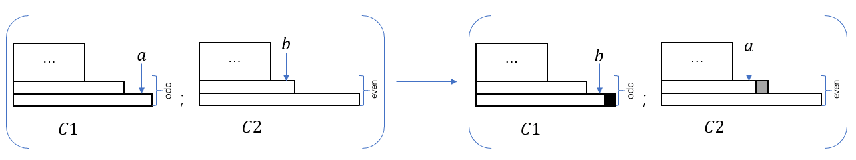}
  \end{center}
  \caption{$C_n$ partitions $C_1$, $C_2$ with   $L(C1)=L(C2)-1$.   }
  \label{cbe}
\end{figure}
We swap the row $a$ of the partition $C1$ with the row $b$ of the partition $C2$ by  deleting the last box of  the row $b$ and appending a box at the end of the row $a$.  However the first two rows of the new factor $C2$ would have the same lengths, violating the rigid condition (1).  We can get the same conclusion for the operator $(C1,C2)$  before taking  the algorithm $CO$.

Summary, the $C_n$ rigid semisimple surface operators $(C1, C2)$ with $|L(C1)-L(C2)|=1$ can not have rigid $B_n$ duals. These problematic surface operators are denoted  as the $IC$ type.

For one class of  the special rigid semisimple surface operator $(\lambda_{even},\lambda_{odd})_C$, there is another strategy to construct the $S$-duality maps.  We will come back this problem in Section \ref{diss}.

\subsection{Generating  $C_n$  rigid semisimple surface operators from the  $B_n$ theory}\label{cb}
The construction of rigid semisimple surface operators  in the $C_n$ theory  from that in the  $B_n$ theory  is  roughly parallel to the discussions in  the last subsection. According to Propositions \ref{Pb} and \ref{Pc}, the first row of the partitions in $B_n$ theory is odd and the first row of the partitions in $D_n$ theory is even.

There are two cases according to the location of the longest row of the factors of the $B_n$ rigid semisimple surface operators.
\begin{itemize}
  \item The longest row of  the rigid semisimple surface operator is  the first row of the  $B_n$ partition $B1$. We suggest the algorithm $BO$ as shown in Fig.({\ref{BC1}}):  delete the last box of the longest row and then  add it to the $D_n$ partition $D2$. Then the first two rows of the $C_n$ partitions  $C2$ are even.    And the partition $C1$ satisfies the rigid condition naturally.
\begin{figure}[!ht]
  \begin{center}
    \includegraphics[width=5in]{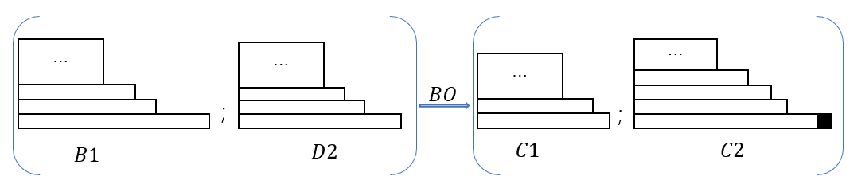}
  \end{center}
  \caption{Partitions $B1$ and $D2$ are in the $B_n$ and $D_n$ theories, respectively. Partitions $C1$ and $C2$ are in the $C_n$ theory. Algorithm $BO$ maps $B_n$ rigid semisimple surface operators to $C_n$ rigid semisimple surface operators. }
  \label{BC1}
\end{figure}
  \item The longest row of the rigid semisimple surface operator is  the first row of the $D_n$ partition $D2$. We suggest the algorithm $BE$ as shown in Fig.({\ref{BC2}}):  delete the last box of the longest row and then  add it to the $B_n$ partition $B1$.   Then the first two rows of the $C_n$ partitions $C1$ are odd.   And the partition $C2$ satisfies  the rigid condition naturally.
  \begin{figure}[!ht]
  \begin{center}
    \includegraphics[width=5in]{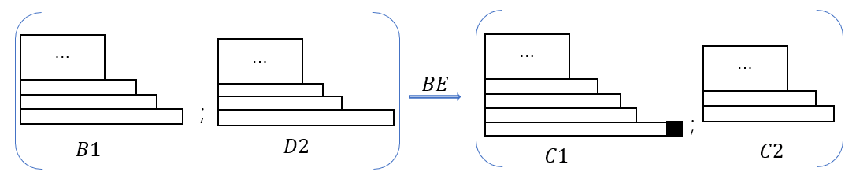}
  \end{center}
  \caption{Partitions $B1$ and $D2$ are in the $B_n$ and $D_n$ theories, respectively. Partitions $C1$ and $C2$ are in the $C_n$ theory. $BE$ map  $B_n$ rigid semisimple surface operators to $C_n$ rigid surface operators.}
  \label{BC2}
\end{figure}
\end{itemize}

However, the partitions $C2$ under the algorithms $BO$ and  $C1$  under the algorithms $BE$ do not always  preserve the rigid conditions.
\begin{flushleft}
 \textbf{ $IB$ type problematic surface operators:} $L(B1)$ and $L(D2)$ denote the lengths of the partitions  $B1$ and $D2$, respectively.
\end{flushleft}

\begin{itemize}
 \item If $L(B1)=L(D2)+1$,  then $\lambda_{l-1}-\lambda_{l}=2$ in the    partition $C2$ under the  algorithm $BO$, violating  the  rigid condition.
 \item If $L(B1)=L(D2)-1$,  then $\lambda_{l-1}-\lambda_{l}=2$ in the   partition $C1$ under the  algorithm $BE$, violating  the  rigid condition.
  \end{itemize}

 To  dispel the  obstruction of  the algorithm $BO$ with $L(B1)=L(D2)+1$,  we may try to  take the $B_n$  operator to another  $B_n$ operator by symbol preserving map  as shown in Fig.(\ref{boe})(a).
We swap the row $a$  with row $b$,  deleting the last box of  the row $b$ and appending a box at the end of the row $a$.  However this operation will not lead to a rigid surface operator since the integer '1' would appear twice in the  $B_n$  partition $B_1$, violating  the  rigid condition.  We  may swap the even row $b$  with even row $c$ as shown in Fig.(\ref{boe})(b).
From the condition $L(B1)=L(D2)+1$, we have  $L(b) \geq L(a)$.
So this operation will not lead to a rigid surface operator $B_1$ in the end.

Similarly, the above operations will not improve the   algorithm $BE$ to get a \rso \,\,\, under the condition $L(B1)=L(D2)-1$.
\begin{figure}[!ht]
  \begin{center}
    \includegraphics[width=5in]{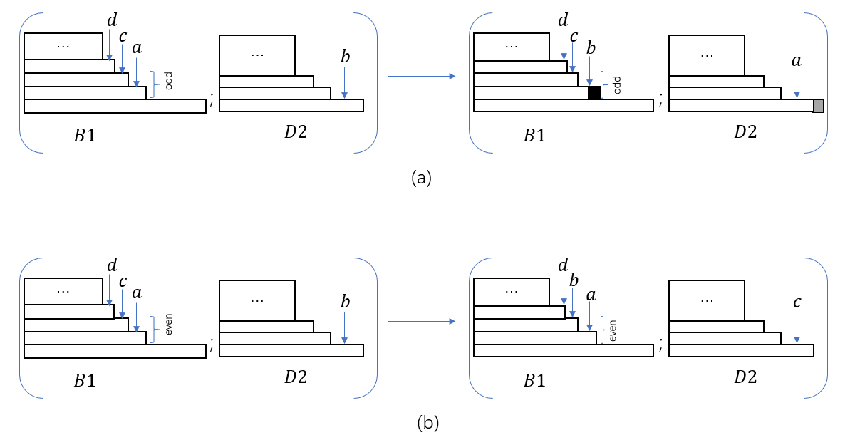}
  \end{center}
  \caption{Maps in Figs.$(a)$ and $(b)$  preserve symbol.  }
  \label{boe}
\end{figure}

Summary, the $B_n$ rigid semisimple surface operators $(B1, D2)$ with $|L(B1)-L(D2)|=1$ can not have rigid $C_n$ duals. These problematic surface operators are denoted  as the $IB$ type.

For the class of the special rigid surface operators $(\lambda_{odd},\lambda_{even})_B$, there is another strategy to construct the $S$-duality maps.  We will  come back to this problem in Section \ref{diss}.

\subsection{One to one correspondence of   maps preserving symbol }\label{otooto}

The second class of symbol  preserving maps is also called $S$-duality  maps, which  take  rigid semisimple surface operator to another  rigid semisimple surface operator in the dual theory. For examples, the algorithms proposed in the last two subsections.
 We find the following relationship between the symbol preserving maps on the two side of these algorithms.
\begin{proposition}\label{oto}
For the algorithms $EE$, $OO$, $CO$, $CE$, $BO$, and $BE$ preserving symbol and  the rigid conditions,  there are  one to one correspondence of the fist class of    symbol  preserving maps  on the two  side of  these algorithms.
\end{proposition}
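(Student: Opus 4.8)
The plan is to reduce the statement to the observation, already isolated in Propositions \ref{pair-row} and \ref{row-eo}, that the contribution to symbol of a single row is an invariant. Consequently the symbol of a rigid semisimple operator is the sum of the contributions of its ``structural units'' — the pairwise pairs of rows, together with the one distinguished first row present when a factor is of type $B_n$ or $D_n$ — and, by the classification of second--class symbol preserving maps in \cite{iv}, such a map is precisely a rearrangement of these units between the two factors, possibly combined with a regrouping of ``bottom row of an odd pair of length $L$'' $\leftrightarrow$ ``top row of an even pair of length $L-1$'' type as allowed by Proposition \ref{row-eo}, subject only to the rigid conditions.

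First I would record, for each of $EE$, $OO$, $CE$, $CO$, $BO$, $BE$, which structural units are actually altered. In every one of the six cases the algorithm removes the longest row $r$ of the pair $(C1,C2)$ from its factor, appends or deletes a single box at the end of $r$, and re-inserts $r$ as the distinguished first row of the opposite factor, one companion row of that factor being promoted or demoted so that the pairwise bookkeeping stays consistent --- exactly as drawn in Figures \ref{EE}, \ref{OO}, \ref{CE}, \ref{CO}, \ref{BC1} and \ref{BC2}. By Tables \ref{tsynu} and \ref{tsyno} this is the elementary length $L\leftrightarrow L+1$ trade of Proposition \ref{row-eo} performed on the unit of $r$, and by invariance of row--contributions the modified $r$ is again the longest row afterwards, while every other row keeps its length, its factor and its position inside the pairwise rows. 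Hence each algorithm $\mathcal A$ induces a canonical bijection $\iota$ from the set of structural units of $(C1,C2)$ onto that of $(B1,D2)$: it is the identity on all units not containing $r$, and it sends the unit of $r$ to the unit of the lengthened (resp.\ shortened) $r$.

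Next I would transport a symbol preserving map through $\iota$. Given a second--class symbol preserving map $\phi$ of $(C1,C2)$, write it as a rearrangement/regrouping of structural units; applying $\iota$ to the units involved produces a rearrangement/regrouping $\bar\phi$ of the units of $(B1,D2)$ with the same effect on the symbol, and the assignment $\phi\mapsto\bar\phi$ is invertible with inverse built from $\iota^{-1}$, hence one--to--one. This also makes the correspondence compatible with the algorithms: the target $\phi(C1,C2)$ and the target $\bar\phi(B1,D2)$ are again a matching pair under the relevant algorithm. It is convenient to split into the case in which $\phi$ leaves $r$ and its partner untouched, where $\bar\phi$ is literally the same combinatorial move on the common tail, and the case in which $\phi$ does act on $r$ --- a short finite check, since $r$ can only change factor or change how it pairs with the row immediately below it, each such move being matched with a unique move on the image by Proposition \ref{row-eo} applied to $r$.

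The step I expect to be the main obstacle is showing that $\phi\mapsto\bar\phi$ is well defined as a map \emph{between symbol preserving maps of rigid operators}, i.e.\ that $\bar\phi(B1,D2)$ is rigid exactly when $\phi(C1,C2)$ is. This cannot be argued purely formally, since rigidity genuinely fails on one side but not the other in the borderline situations responsible for the $IC$-type and $IB$-type problematic operators of Sections \ref{di} and \ref{cb} and for the length conditions $|L(C1)-L(C2)|=1$, $L(B1)=L(D2)\pm1$. I would handle it with the same no--gaps / ``no odd (resp.\ even) integer appears exactly twice'' analysis as in Propositions \ref{eeoo} and \ref{eeooee}: because $\mathcal A$, and likewise every elementary move, only ever lengthens the \emph{first} row of a pairwise pair and shortens a \emph{second} row, the difference of consecutive parts at the bottom of each factor and the multiplicity of the part $1$ stay controlled, and one verifies row by row that the rigid obstruction for $\bar\phi(B1,D2)$ is triggered if and only if the corresponding obstruction for $\phi(C1,C2)$ is, the two being exchanged under $\iota$. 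Carrying this out in the six cases is routine but is where the content of the proposition resides.
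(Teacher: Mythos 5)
Your argument is essentially the paper's own: the algorithm fixes the change of the longest row, and the symbol-preserving rearrangements of the remaining rows (your ``structural units'', the paper's ``blue parts'' in Fig.~\ref{oto0}) are transported identically between the two sides via Propositions \ref{pair-row} and \ref{row-eo}, giving the bijection. You are in fact more explicit than the paper about the step of checking that rigidity is preserved under the transported moves --- the paper simply asserts the correspondence and illustrates it with two examples --- so your proposal is, if anything, a more careful version of the same proof.
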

\begin{figure}[!ht]
  \begin{center}
    \includegraphics[width=5in]{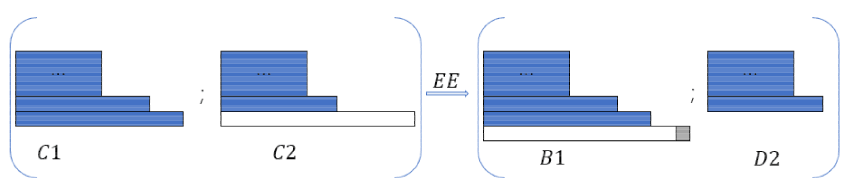}
  \end{center}
  \caption{ Algorithm $EE$ map  $C_n$ rigid operators to $B_n$ rigid operators.  The changes on the blue parts of the rigid surface operators on the left hand side are one to one correspondence to that on the right hand side. }
  \label{oto0}
\end{figure}
\begin{proof}
We prove the proposition for the algorithm  $EE$  as shown in Fig.(\ref{oto0}).  According to the discussions in Section \ref{f1}, for generating rigid semisimple surface operator in the $B_n$ theory from that in the $C_n$ theory, the change of the longest row  is fixed.  The changes are one to  one correspondence between the   blue parts on the two sides  of algorithm  $EE$.

Similarly, we can prove the proposition for the algorithms  $OO$, $CO$, $CE$, $BO$, and $BE$.
\end{proof}
\begin{rmk}
 The algorithms $EE$, $OO$, $CO$, $CE$, $BO$, and $BE$ can be regarded as functors between dual theories,  since they not only map the operators in one theory to that of the dual theory but also map the changes on one side of the algorithms to that of  the other side.
\end{rmk}

We  illustrate this proposition by  two examples as shown in Fig.(\ref{oto1})and Fig.(\ref{oto2}).  The algorithm  $EE$ map the $C_n$ surface operators to the $B_n$ surface operators. The  rows $c11$,  $c12$, $c21$, and $c22$  have the same parities.

For the first example as shown in Fig.(\ref{oto1}), the operation that the rows $c11$ and $c21$ swap places is denoted by down arrow  on the left hand side of the algorithms $EE$, which  leads to  a new rigid semisimple surface operator in the $C_n$ theory.  According to Proposition \ref{pair-row},  this operation preserves symbol and corresponds to the operation swapping $ c11$ with $c21$ denoted by down arrow on  the right hand side of the algorithms $EE$.
\begin{figure}[!ht]
  \begin{center}
    \includegraphics[width=5in]{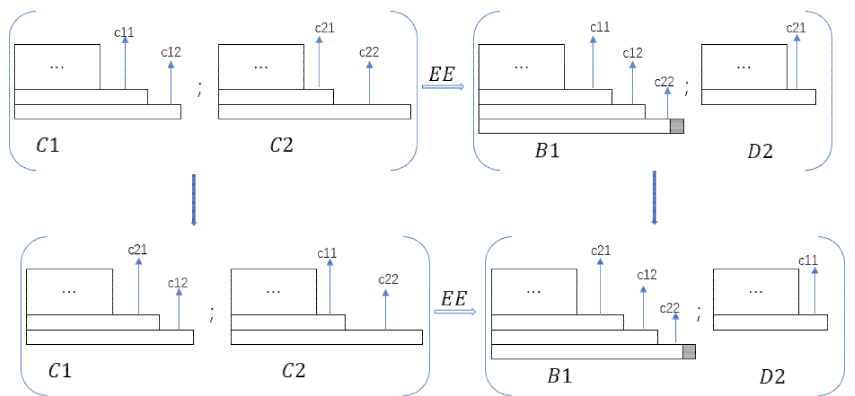}
  \end{center}
  \caption{ Algorithm $EE$ take the  map  preserving symbol of the  $C_n$ rigid surface operator   to that of the  $B_n$ rigid surface operator.}
  \label{oto1}
\end{figure}

For the second example as shown in Fig.(\ref{oto2}),  the row $c21$ of $C2$ is inserted into $C1$. The row $c21$ and rows above it of the partition $C2$   would change parities as well as the   rows above the $c11$ of the partition $C1$.
This operation is denoted   by down arrow on the left hand side of the algorithms $EE$, leading to a new semisimple \rso \, in the same theory.  According to Proposition \ref{row-eo},  this operation  preserve symbol and  corresponds to operation denoted by down arrow on  the right hand side of the algorithms $EE$.
\begin{figure}[!ht]
  \begin{center}
    \includegraphics[width=5in]{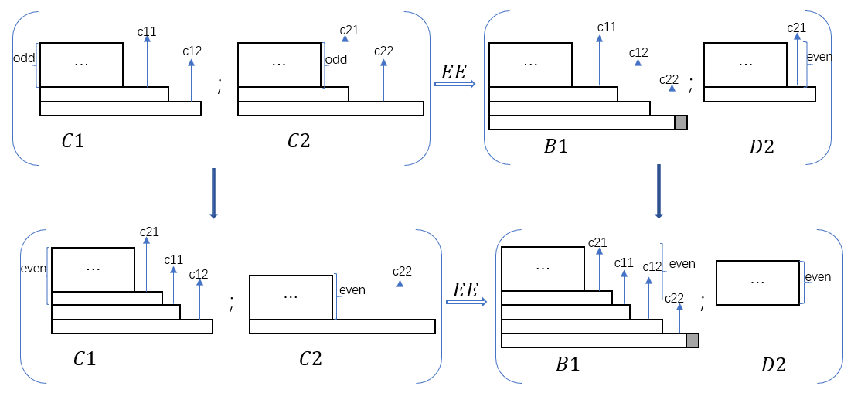}
  \end{center}
  \caption{  Algorithm $EE$ take the  map  preserving symbol of the  $C_n$ rigid surface operator   to that of the  $B_n$ rigid surface operator. }
  \label{oto2}
\end{figure}

As an application,  Proposition \ref{oto}  ensure the  equality  of  the number of rigid surface operators on two sides of  these algorithms ($S$ duality maps).

\subsection{$II$ type problematic surface operators}\label{m2}
Besides  the $IC$ and $IB$  problematic surface operators,   there is another kind of problematic surface operators: the number of     surface operators of one theory  is more than that of the dual  theory with the same symbol invariant. The number of  surface operators in  the $B_n$ theory is one more than that in the $C_n$ theory as shown in  the $18$th and $19$th examples  in the appendix.

\begin{figure}[!ht]
  \begin{center}
    \includegraphics[width=5in]{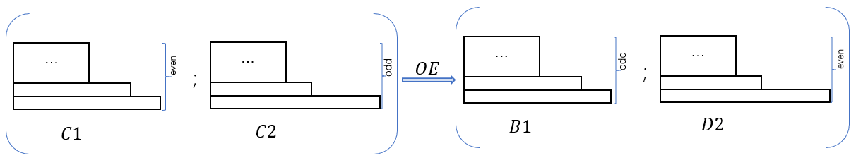}
  \end{center}
  \caption{ Partition  $C1$ with only even rows  and the partition $C2$ with odd rows are in the $C_n$ theory. The partitions  $B1$ with only odd rows and $D2$ with only even rows are in the $B_n$ and $D_n$ theories, respectively. Algorithm $OE$ take the surface operators in the $C_n$ theory to that  in the $B_n$ theory.  }
  \label{noto0}
\end{figure}
This kind of problematic surface operators  appear in the second strategy for the  construction of  the $S$ duality maps.
  $(\lambda_{even}, \rho_{odd})_C$ is  a surface operator in the  $C_n$ theory,  and $\lambda_{even}$ and $\rho_{odd}$ are  partitions with even rows  and odd rows only, respectively.  We take the following algorithm $OE$ to get the $B_n$ rigid semisimple surface operators from that of the $C_n$ theory as shown in Fig.(\ref{noto0}).
$$OE: (\lambda_{even}, \rho_{odd})_C \rightarrow (X_S^{-1}\lambda_{even}, Y_S\rho_{odd})_B \rightarrow (\lambda^{'}_{odd}, \rho^{'}_{even})_B.$$
\begin{figure}[!ht]
  \begin{center}
    \includegraphics[width=5in]{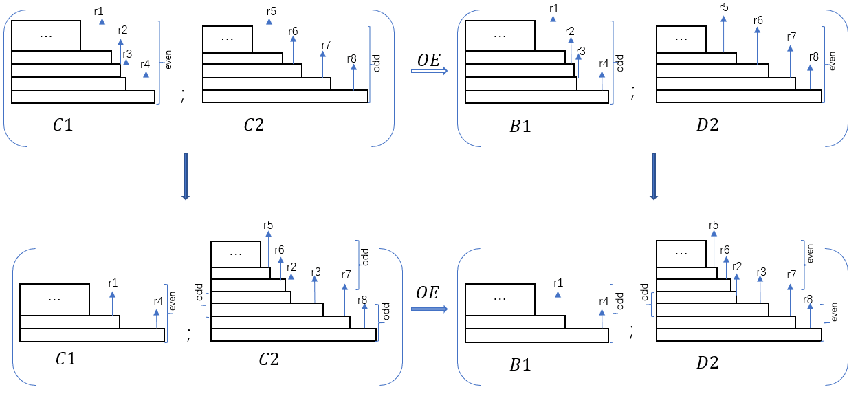}
  \end{center}
  \caption{A pairwise rows  $r2$ and $r3$ of the partition $C1$ are inserted into the partition $C2$.  And A pairwise rows  $r2$ and $r3$ of the partition $B1$ are inserted into the partition $D2$. These two operations are one to one correspondence  under  the algorithm $OE$.   }
  \label{noto1}
\end{figure}

On the other hand,  the algorithm $OE$  as a functor map  the symbol preserving  changes of $C_n$  surface operator $(\lambda_{even}, \rho_{odd})_C $ to that of the $B_n$ one as  shown in Fig.(\ref{noto1}).
However not all the changes on the right hand side of $OE$ could   be realized  on the left hand side. As  shown in Fig.(\ref{noto2}), the even row $r1$ is the top row of  a pairwise rows of the partition $C1$, and the odd row $r2$ is the bottom row of  a pairwise rows  of the partition $C2$. The length of $r1$ is shorter than that of the row $r2$. Under the algorithm $OE$, to preserve the symbol,  the row $r1$ becomes odd and becomes the bottom row of  a pairwise rows of  $B1$, and the row $r2$ becomes even and becomes the top row of  a pairwise rows of  $D2$.
Now we  take the   $B_n$ \rso \,  $(\lambda^{'}_{even}, \rho^{'}_{odd})$ to another $B_n$ \rso \, under the down arrow on the right hand side of $OE$.
We  put the  $r1$ and the parts  above it  above  $r2$ of $D2$. 
This  change of the  $B_n$ \rso \,  $(X_S^{-1}\lambda_{even}, Y_S\rho_{odd})$   can not be realized in the  $C_n$ \rso \,  $(\lambda_{even}, \rho_{odd})$.  Assume $r1$ is putted above $r2$ on the left hand side of $OE$, corresponding to the down arrow on the left  hand side of $OE$ as shown in Fig.(\ref{noto2}),  then they form a pairwise rows with different parities,  which is a contradiction.
\begin{figure}[!ht]
  \begin{center}
    \includegraphics[width=5in]{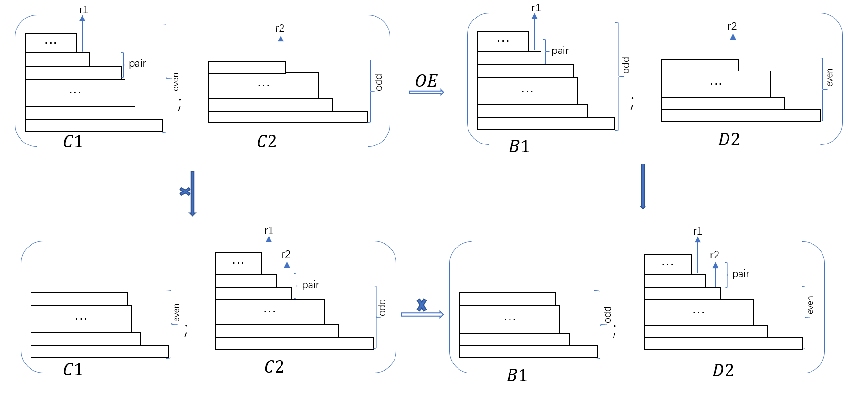}
  \end{center}
  \caption{Algorithm $OE$ take  the $C_n$ \rso \,   \,\,$(\lambda_{even}, \rho_{odd})$ to the $B_n$ \rso \,   \,\,$(X_S^{-1}\lambda_{even}, Y_S\rho_{odd})$. The row $r1$ and the rows above $r1$ of $B1$ are placed  upon the row  $r2$ of $D2$ under the algorithm $OE$.  The operation corresponding to the down arrow on the right hand of $OE$ fail to  be realized on the left hand side  of   $OE$.  }
  \label{noto2}
\end{figure}

The  algorithms  in Figs.(\ref{noto1}) and  (\ref{noto2}) are particularly revealing.  For  the $C_n$ operators $(\lambda_{even}, \rho_{odd})$,   the algorithm $OE$ will work  when the algorithm in Sections \ref{di}   fail to preserve rigid conditions.
Since not all the symbol preserving maps  of surface operators on the right  side of $OE$ can be  realized on the left side,  the number of  rigid $B_n$  surface operators is more than  that of   the $C_n$ surface operators with the symbol invariant of the \rso \, in Fig(\ref{noto2}). We denote them as the $IIC$ type problematic surface operators.

Similarly, we can propose an algorithm $OE$ to get  $C_n$ rigid semisimple surface operators from that of the $B_n$ theory as follows
$$EO:(\lambda^{'}_{odd}, \rho^{'}_{even})_B \rightarrow (X_S\lambda^{'}_{odd}, Y^{-1}_S\rho^{'}_{even})_C\rightarrow (\lambda_{even}, \rho_{odd})_C .$$
And we come to the  conclusion    that  the number of  rigid $C_n$  surface operators is more than  that of   the $B_n$ surface operators for certain symbol invariant.
We denote them as the $IIB$ type problematic surface operators.

\subsection{Generating $D_n$  rigid semisimple surface operator from the $D_n$ theory}
 Since the langlands dual groups of $\SO(2n)$ are  themselves, the $D_n$ theory is self duality. The first class of symbol preserving maps is the same with the second class of symbol preserving map. Thus the  $S$-duality pairs can be  realized by the first  class of symbol preserving maps,  which do not lead to semisimple  surface operators        violating rigid conditions.
 For the certain symbol invariant  with only one \rso \,, we suggest  the following $S$ duality map
$$\mathbf{1}:(\lambda, \rho)_D \rightarrow (\lambda, \rho)_D,$$
which map a rigid surface operator to itself.

\subsection{Classification of problematic surface operators and discussions}\label{diss}

\begin{table}
\begin{tabular}{|c|c|c|c|c|c|}\hline
Type  & Theory  & Surface operator  & Conditions & Algorithms \\ \hline
IC& $C_n$  & $(\lambda, \rho)_C$  &$|L(\lambda)-L( \rho)|=1$,\,$(\lambda, \rho)_C\neq (\lambda_{odd}, \rho_{even})_C$ & $CE$, $CO$ \\ \hline
IB & $B_n$   &  $(\lambda, \rho)_B$ & $|L(\lambda)-L( \rho)|=1$,\,$(\lambda, \rho)_B \neq (\lambda_{odd}, \rho_{even})_B$& $BE$, $BO$   \\ \hline
IIC & $C_n$   &  $(\lambda, \rho)_C$  & $(\lambda, \rho)_C=(\lambda_{odd}, \rho_{even})_C$ & $OE$ \\ \hline
IIB & $B_n$  &   $(\lambda, \rho)_B$  &$(\lambda, \rho)_B = (\lambda_{odd}, \rho_{even})_B$& $EO$  \\ \hline
\end{tabular}
\caption{ Classification of  problematic surface operators. The subscripts $odd $ and $even$ mean partitions with only odd and even rows, respectively.  }
\label{ms}
\end{table}
We  find  new type of  of problematic operators  excepting  all the ones  given  in \cite{Wy09}
Even more, the algorithms proposed give all the problematic rigid surface operators. The classification of the problematic surface operators  in the previous sections is given  by Table \ref{ms}. The discussions in Section \ref{f1} and Proposition \ref{oto} ensure it is a completeness classification.

\begin{figure}[!ht]
  \begin{center}
    \includegraphics[width=2in]{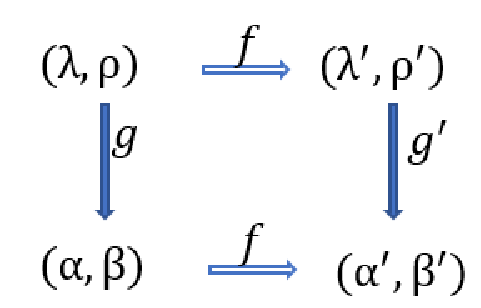}
  \end{center}
  \caption{ $g$ and $g'$ belong to  the first class of symbol preserving maps which map surface operators from one theory to itself.  $f$   is the second class of map ($S$-duality map) which  take surface operator from one theory to the dual theory as the algorithms $EE, OO$.  The map $f$ also map the first class of map $g$ to $g'$.  }
  \label{prsu}
\end{figure}
We find two  types of problematic surface operators in this study which are denoted as   $I$ and $II$.  For the non-problematic surface operators, we have the  commutation relation as shown in Fig.(\ref{prsu}).
The $I$ type  surface operators exists only in one theory which mean the map $f$ can not work in Fig.(\ref{prsu}).  For example, the algorithm $BO$ can not map the operators in $B_n$ theory to that in $C_n$ theory with restriction  conditions given in Table (\ref{ms}).
And the $II$ type one are surface operators which  do not have the same number of operators in dual theories with certain symbol. It means that the map $g$ can be only realized in one theory as shown in Fig.(\ref{noto2}). The origin of both types of problematic surface operators are the rigid conditions.


We can learn much from Table \ref{ms}.
When the  algorithms $CE$ and $CO$ work, they would realize all the $S$ duality pairs with certain symbol.
When the algorithms $CE$ and $CO$  fail to realize the $S$ duality pairs, the algorithm $OE$ is the only choice to work,  which   is an evidence of the $S$ duality map $CB_{eo}$ (\ref{cbeo}).


From  formula (\ref{num}),  one gets  further insight into the mismatch problem.  The coefficient is positive, which imply that the number of rigid surface operators  in the $B_n$ theory is larger  than that in  the $C_n$ theory. A naive gauss would be that there are more  $B_n$ surface operators than $C_n$ surface operators  with the given  symbol. In fact, in \cite{Wy09},  it only point out  that the number of the $B_n$ surface operators is more than the number of   the $C_n$ surface operators. They do not find that there are   rigid surface operators in the $C_n$ theory  which do not have candidate duals in the $B_n$ theory.
However, according to Table \ref{ms}, the $IC$  type $C_n$  problematic     surface operators  can not have duals in the $B_n$ theory under the algorithms $BE$ and $BO$.  They also did not find the  $IIC$ type problematic surface operators  in the $C_n$ theory.

The number of  the rigid surface operators which do not have candidate duals in the $C_n$ theory do increase with the  rank $n$ from the discussion in Section \ref{di}. Fortunately,   the excess number of states divided by the total number  appears to approach zero as $n\rightarrow\infty$. So one  hopes that only a minor modification is needed to make the numbers match, which is consistent  with the fact that most rigid surface operators do seem to have candidate duals.

The physical reason for the discrepancy is still unknown. Throughout this paper we will only consider strongly rigid operators which we refer to as rigid surface operator.  From the discussions,  we  should also take account of the larger class including the weakly rigid surface operators discussed in \cite{Wy09} or the quantum  effect to resolve  the mismatch in the total number of rigid surface operators. Clearly more work is required.

Furthermore,  the construction of symbol invariant can be used  to study the $S$ duality of the rigid  surface operators in   other  Langlands dual groups  such as exception Lie algebra, and the ??research  problems related to the Springer correspondence.

\section*{Acknowledgments}
We would like to thank  Zhisheng Liu and Qi Li for  many helpful discussions.
Chuanzhong Li is supported by the National Natural Science Foundation
of China under Grant No.12071237.


\appendix

\section{Rigid semisimple  surface operators in $\SO(13)$ and $\Sp(12)$ }\label{tb}
 The first column is the type of  the duality maps listed  in \cite{ShO06} .  The second and third columns   list   pairs of partitions corresponding to the surface operators in the $B_n$ and $C_n$ theories. The other  columns are  the  dimension,   symbol invariant,   and  fingerprint invariant of the surface operator, respectively.  Even the   mismatch in the total number of rigid surface operators in the $B_n$ and $C_n$ theories can be explained. The $18$th and $19$th pairs of rigid semisimple surface operators belong to the $II$ type mismatch. The $20$th, $23$th, and $24$th pairs of rigid semisimple surface operators belong to the $I$ type mismatch.

\begin{equation} {
\begin{array}{l@{\hspace{10pt}}l@{\hspace{10pt}}l@{\hspace{10pt}}l@{\hspace{10pt}}l@{\hspace{10pt}}c@{\hspace{10pt}}l}
\underline{Num}
&
\underline{Sp(12)}
&
\underline{SO(13)}
&
\underline{Dim}
&
\underline{Symbol}
&
\underline{Fingerprint}
\\[-0.5pt]
1
&
(1^{12}\,;\emp)
&
(1^{13};\emp)
 &
 0
 &
\left(\begin{array}{@{}c@{}c@{}c@{}c@{}c@{}c@{}c@{}c@{}c@{}c@{}c@{}c@{}c@{}c@{}}
 0&&0&&0&&0&&0&&0&&0 \\
 &1&&1&&1&&1&&1&&1&
\end{array}\right)
 &
[1^6;\emp]
 \\[-0.5pt]
 2
&
 (2\,1^{10}\,;\emp)
 &
(1;1^{12})
 &
 12
 &
\left(\begin{array}{@{}c@{}c@{}c@{}@{}c@{}c@{}c@{}c@{}c@{}c@{}c@{}c@{}}
 1&&1&&1&&1&&1&&1 \\
 &0&&0&&0&&0&&0&
\end{array}\right)
 &
[1^5;1]
 \\[-0.5pt]
3
&
  (1^{10}\,;1^2)
 &
(2^2\,1^9;\emp)
 &
 20
 &
\left(\begin{array}{@{}c@{}c@{}c@{}c@{}c@{}c@{}c@{}c@{}c@{}c@{}c@{}}
 0&&0&&0&&0&&0&&0 \\
 &1&&1&&1&&1&&2&
\end{array}\right)
 &
[2\,1^4;\emp]
 \\[-0.5pt]
 4
&
 (2^3\,1^6\,;\emp)
 &
(1;2^2\,1^8)
 &
 30
 &
\left(\begin{array}{@{}c@{}c@{}c@{}@{}c@{}c@{}c@{}c@{}c@{}c@{}}
 1&&1&&1&&1&&1 \\
 &0&&0&&0&&1&
\end{array}\right)
 &
[1^3;1^3]
\\[-0.5pt]
5
&
(2\,1^8\,;1^2)
 &
(1^3;1^{10})
 &
 30
 &
\left(\begin{array}{@{}c@{}c@{}c@{}@{}c@{}c@{}c@{}c@{}c@{}c@{}c@{}c@{}}
 1&&1&&1&&1&&1 \\
 &0&&0&&0&&1&
\end{array}\right)
 &
[1^3;1^3]
 \\[-0.5pt]
 6
&
 (1^{8}\,;1^4)
 &
(2^4\,1^5;\emp)
 &
 32
 &
\left(\begin{array}{@{}c@{}c@{}c@{}c@{}c@{}c@{}c@{}c@{}c@{}}
 0&&0&&0&&0&&0 \\
 &1&&1&&2&&2&
\end{array} \right)
 &
[2^2\,1^2;\emp]
 \\[-0.5pt]
7
&
 (2^4\,1^4\,;\emp)
 &
(3\,2^2\,1^6;\emp)
 &
 36
 &
\left(\begin{array}{@{}c@{}c@{}c@{}c@{}c@{}c@{}c@{}c@{}c@{}}
 0&&0&&0&&1&&1 \\
 &1&&1&&1&&1&
\end{array}\right)
 &
[1^2;1^4]
 \\[-0.5pt]
 8
&
 (1^8\,;2\,1^2)
 &
(1^9,1^4)
 &
 36
 &
\left(\begin{array}{@{}c@{}c@{}c@{}c@{}c@{}c@{}c@{}c@{}c@{}}
 0&&0&&0&&1&&1 \\
 &1&&1&&1&&1&
\end{array}\right)
 &
[1^2;1^4]
 \\[-0.5pt]
9
&
 (1^6\,;1^6)
 &
(2^6\,1;\emp)
 &
 36
 &
\left( \begin{array}{@{}c@{}c@{}c@{}c@{}c@{}c@{}c@{}}
 0&&0&&0&&0 \\
 &2&&2&&2&
\end{array} \right)
 & [2^3;\emp]
 \\[-0.5pt]
 10
&
 (2^5\,1^2\,;\emp)
 &
(1;2^4\,1^4)
 &
 40
 &
\left(\begin{array}{@{}c@{}c@{}c@{}@{}c@{}c@{}c@{}c@{}}
 1&&1&&1&&1 \\
 &0&&1&&1&
\end{array}\right)
 &
[1;1^5]
 \\[-0.5pt]
 11
&
  (2\,1^6\,;1^4)
 &
(1^5;1^8)
 &
 40
 &
\left(\begin{array}{@{}c@{}c@{}c@{}@{}c@{}c@{}c@{}c@{}c@{}c@{}}
 1&&1&&1&&1 \\
 &0&&1&&1&
\end{array}\right)
 &
[1;1^5]
 \\[-0.5pt]
 12
&
  (1^6\,;2\,1^4)
 &
(1^7;1^6)
 &
 42
 &
\left(\begin{array}{@{}c@{}c@{}c@{}c@{}c@{}c@{}c@{}}
 0&&1&&1&&1 \\
 &1&&1&&1&
\end{array}\right)
 &
[\emp;1^6]
 \\[-0.5pt]
 13
&
(3^2\,2\,1^4\,;\emp)
 &
(1^3;2^2\,1^6)
 &
 44
 &
\left(\begin{array}{@{}c@{}c@{}c@{}@{}c@{}c@{}c@{}c@{}c@{}c@{}}
 1&&1&&1&&1 \\
 &0&&0&&2&
\end{array}\right)
 &
[3\,1^2;1]
 \\[-0.5pt]
 14
&
  (2^3\,1^4\,;1^2)
 &
(2^2\,1;1^8)
 &
 44
 &
\left(\begin{array}{@{}c@{}c@{}c@{}@{}c@{}c@{}c@{}c@{}c@{}c@{}}
 1&&1&&1&&1 \\
 &0&&0&&2&
\end{array}\right)
 &
[3\,1^2;1]
 \\[-0.5pt]
 15
&
(2\,1^6\,;2\,1^2)
 &
(1;3\,2^2\,1^5)
 &
 44
 &
\left(\begin{array}{@{}c@{}c@{}c@{}@{}c@{}c@{}c@{}c@{}}
 1&&1&&2&&2 \\
 &0&&0&&0&
\end{array}\right)
 &
[2\,1^2;2]
 \\[-0.5pt]
 16
&
  (2^4\,1^2\,;1^2)
 &
(2^21^5;1^4)
 &
 48
 &
\left(\begin{array}{@{}c@{}c@{}c@{}c@{}c@{}c@{}c@{}}
 0&&0&&1&&1 \\
 &1&&1&&2&
\end{array}\right)
 &
[3\,1;1^2]
 \\[-0.5pt]
 17
&
  (2\,1^4\,;2\,1^4)
 &
(1;3\,2^4\,1)
 &
 48
 &
\left(\begin{array}{@{}c@{}c@{}c@{}@{}c@{}c@{}}
 2&&2&&2 \\
 &0&&0&
\end{array}\right)
 &
[2^2;2]
 \\[-0.5pt]
 18
&
  (2^3\,1^2\,;1^4)
 &
(1^5;2^2\,1^4)
 &
 50
 &
\left(\begin{array}{@{}c@{}c@{}c@{}c@{}c@{}}
  1&&1&&1 \\
 &1&&2&
\end{array}\right)
 &
[3;1^3]
 \\[-0.5pt]
 19
&
-
 &
 (2^21^3;1^6)
&
 50
 &
\left(\begin{array}{@{}c@{}c@{}c@{}@{}c@{}c@{}c@{}c@{}}
 1&&1&&1 \\
 &1&&2&
\end{array}\right)
 &
[3;1^3]
 \\[-0.5pt]
 20
&
-
 &
(2^4\,1;1^4)
 &
 52
 &
\left(\begin{array}{@{}c@{}c@{}c@{}c@{}c@{}}
 0&&1&&1 \\
 &2&&2&
\end{array}\right)
 &
[3^2;\emp]
\\[-0.5pt]
 21
&
 (2^3\,1^2\,;2\,1^2)
 &
(1^3;3\,2^2\,1^3)
 &
 54
 &
\left(\begin{array}{@{}c@{}c@{}c@{}@{}c@{}c@{}c@{}c@{}}
  1&&2&&2 \\
  &0&&1&
\end{array}\right)
 &
[3\,1;2]
\\[-0.5pt]
22^{*}
&
 (3^2\,2\,1^2\,;1^2)
&
(2^2\,1;2^2 \,1^4)
 &
 54
 &
\left(\begin{array}{@{}c@{}c@{}c@{}@{}c@{}c@{}c@{}c@{}}
 1&&1&&1 \\
 &0&&3&
\end{array}\right)
 &
[4\,1;1]
 \\[-0.5pt]
23
&
-
 &
(1^5;3\,2^2\,1)
 &
 56
 &
\left(\begin{array}{@{}c@{}c@{}c@{}@{}c@{}c@{}c@{}c@{}}
 0&&2&&2 \\
 &1&&1&
\end{array}\right)
 &
[3;2\,1]
 \\[-0.5pt]
24
&
-
 &
(2^2\,1;3\,2^2\,1)
 &
 60
 &
\left(\begin{array}{@{}c@{}c@{}c@{}}
 2&&2 \\
 &2&
\end{array}\right)
 &
[\emp;2^3]
\end{array}
}\non
\end{equation}

\end{document}